\newcommand{\bitemize}{\begin{list}{$\bullet$}{\topsep=1pt \parsep=0pt \itemsep=1pt \leftmargin=1em }} 
\newcommand{\eitemize}{\end{list}}
\newcommand{\benumerate}{\hspace{-0.5in}\begin{enumerate}\topsep=0pt \parsep=0pt \itemsep=-3pt} 
\newcommand{\eenumerate}{\end{enumerate}}
\newcommand{\beitemize}{\begin{list}{$\bullet$}{\topsep=1.5pt \parsep=0pt \itemsep=1pt \leftmargin=1em }} 
\newcommand{\enitemize}{\end{list}}
\newcommand{\D}{{\mathscr{D}}}	
\newcommand{\R}{{\mathscr{R}}}	
\newcommand{\cM}{\mathcal{M}}
\newcommand{\cO}{\mathcal{O}}
\newcommand{\cP}{\mathcal{P}}	%
\newcommand{\cZ}{\mathcal{Z}}
\newcommand{\bd}{\bm{d}}
\newcommand{\bx}{\bm{x}}
\newcommand{\by}{\bm{y}}
\newcommand{\EE}{\mathbb{E}}
\newcommand{\RR}{\mathbb{R}}
\newcommand{\Lap}{{\text{Lap}}}
\begin{document}
\sloppy\allowdisplaybreaks
\title{PPSM: A Privacy-Preserving Stackelberg Mechanism}
\subtitle{Privacy Guarantees for the Coordination of Sequential Electricity and Gas Markets}


\author{Ferdinando Fioretto}
\authornote{Authors names listed alphabetically. All authors have equal contributions.}
\affiliation{%
  \institution{Georgia  Institute of Technology
   \city{Atlanta} 
   \state{GA} 
  }
  \institution{Syracuse University
    \city{Syracuse} 
   \state{NY} 
  }
}
\email{ffiorett@syr.edu}
\author{Lesia Mitridati}
\affiliation{%
 \institution{Georgia Institute of Technology
    \city{Atlanta} 
    \state{GA} 
  }
}
\email{lmitridati3@gatech.edu}
\author{Pascal Van Hentenryck}
\affiliation{%
 \institution{Georgia Institute of Technology
    \city{Atlanta} 
    \state{GA} 
  }
}
\email{pvh@isye.gatech.edu}

\renewcommand{\shortauthors}{F. Fioretto et al.}

\begin{abstract}  
This paper introduces a \emph{differentially private} mechanism to protect the information exchanged during the coordination of the sequential market clearing of electricity and natural gas systems. 
The coordination between these \textit{sequential} and \textit{interdependent} markets represents a classic Stackelberg game and relies on the exchange of \emph{sensitive information} between the system agents, including the supply and demand bids in each market or the characteristics of the systems. The paper is motivated by the observation that traditional differential privacy mechanisms are unsuitable for the problem of interest: The perturbation introduced by these mechanisms fundamentally changes the underlying optimization problem and even leads to unsatisfiable instances. 
To remedy such limitation, the paper introduces the \emph{Privacy-Preserving Stackelberg Mechanism} (PPSM), a framework that enforces the notions of \textit{consistency} and \textit{fidelity} of the privacy-preserving information to the original problem objective. The PPSM has strong properties: It complies with the notion of differential privacy and ensures that the outcomes of the privacy-preserving coordination mechanisms are close-to-optimality for each agent. The fidelity property is analyzed by providing theoretical guarantees on the \textit{cost of privacy} of PPSM and 
experimental results on several gas and electricity market benchmarks based on a real case study demonstrate the effectiveness of the approach.
\end{abstract}

\keywords{Differential Privacy; Stackelberg Games; Electricity and Gas Markets}  

\maketitle
\section{Introduction}

The liberalization of the energy sectors along with the growing efforts to 
achieve a sustainable energy future have lead to an increased competition 
and the decentralization of decision-making in energy systems, such 
as electricity, heat, and natural gas systems \cite{pinson2017towards}. As a result, the coordination among multiple rational agents has 
become central to achieve efficient and cost-effective operations of 
the overall energy system. In particular, the growing share of \emph{flexible gas-fired power plants} (GFPPs) at the interface between electricity and natural gas systems creates strong interdependencies between their respective markets \cite{ordoudis2018market,alabdulwahab2015coordination}. 

The coordination of \textit{sequential} and \textit{interdependent} agents in the energy system, such as electricity and district heating markets \cite{mitridati2016optimal,mitridati2019bid}, electricity transmission and distribution system operators \cite{yuan2017hierarchical,le2019game,hassan2018energy}, or aggregators and consumers \cite{zhang2016real,momber2015retail}, has traditionally been modeled as a \textit{Stackelberg game} \cite{simaan1973stackelberg}. 
While this coordination approach brings substantial economical benefits to the power system, it also requires the exchange of proprietary information between the agents in order to achieve an optimal strategy. 
Relevant data may represent the costs of producers or the loads of consumers, or technical characteristics of the network. Such information is however considered sensitive as it can provide a competitive advantage over other strategic agents in the energy system, and may incur financial losses. Additionally, a breach of privacy on these parameters may benefit an attacker. For instance, \cite{maharjan2013dependable} analyzed the impact of an attacker who manipulates the information exchange between the aggregator and consumers in the power system and showed the loss in social welfare for both agents.

To address this issue, several privacy-preserving framework have been proposed. In particular, \emph{Differential Privacy} (DP) \cite{dwork:06} captures a desirable privacy property of computations over a dataset. It allows to measure and bound the risk associated with an individual participation to an analysis task. Differential privacy algorithms rely on the injection of carefully calibrated noise to the output of a computation. They can thus be used to \emph{obfuscate} the sensitive data exchanged by the system agents in the market. However, as the paper observes in Section \ref{sec:experiments}, when these privacy-preserving mechanisms are used as input to complex optimization problems, as in the case of \emph{Stackelberg games}, they may produce results that are fundamentally different from those obtained on the original data: They often transform the nature of the underlying optimization problem of the agents, and even lead to \emph{severe feasibility issues}.

As a consequence, despite its strong theoretical foundations, industrial adoption of differential privacy has remained limited. Large-scale deployments were carried out by large data owners, such as Google \cite{fanti:16} and Apple \cite{apple}. These applications, however, do not involve data used for solving complex optimization problems, but rather for evaluating a pre-defined set of queries, e.g., the count of individuals satisfying specific criteria for statistical analysis. 

This paper is motivated by the desire of solving complex coordination problems arising in the sequential and interdependent markets operating the electricity and gas systems, while protecting the privacy of the demand information exchanged in these markets. 

\paragraph{\textbf{Contributions}}
This paper makes several contributions to the state of the art.
Firstly, it introduces the \emph{Privacy-Preserving Stackelberg Mechanism} (PPSM) for the coordination of electricity and natural gas market agents. The approach relies on hierarchical optimization to model the coordination problem and on the notion of differential privacy to protect the exchange of proprietary information between the agents. 
Secondly, it introduces two optimization-based approaches that allow the PPSM to achieve \textit{fidelity} of the privacy-preserving exchanged information with respect to the original problem. These approaches rely on \emph{fidelity constraints} on the primal and dual decision variables of the agents. 
Thirdly, it discusses theoretical guarantees on the PPSM \textit{cost of privacy}. 
Finally, the approach is validated on a real test case for the Northeastern United States \cite{byeon2019unit} and show to bring up to two orders of magnitude error reduction over competitor privacy-preserving mechanisms.

\section{Preliminaries}

\subsection{
Coordination of Sequential Markets}

The strategies of two sequential and interdependent agents, such as market operators in energy systems, represent a classic \emph{Stackelberg game} \cite{simaan1973stackelberg}. In this game-theoretical framework, the \emph{leader}, i.e. the first market clearing, optimizes its decisions while anticipating on the reaction of the \emph{follower}, i.e. the second market clearing.

The leader actions impact the reaction of the follower, through its feasible decision space, which in turn impacts the leader objective and feasible space. As a result,
the leader strategy in Stackelberg games can be modeled as a hierarchical optimization problem \cite{gabriel2012complementarity}:
\begin{subequations} \label{eq:hierarchical}
\begin{alignat}{2}
  \!\!\!\!S=\min_{\bx^L,\bx^F,\by^F} &
  	\;\; \cO^L 
  		\left(\bx^L,\bx^F,\by^F,D^L\right) && \quad 
  		\label{eq:hierarchical0}\\
  \text{s.t.}\;\;& 
  \bx^L  \in \cZ^L \left(\bx^F,\by^F,D^L\right) && \quad 
  \label{eq:hierarchical1}\\
  (\bx^F,\by^F) &= \text{primal, dual sol.~of } &&
   \min_{\bx^F} \; \cO^F \left(\bx^F,D^F\right) 
   \label{eq:hierarchical2}\\ 
	& \quad && \text{s.t. } \; \bx^F  \in \cZ^F \left(\bx^L,D^F\right), 
	\label{eq:hierarchical3}
\end{alignat}
\end{subequations}
where $\bx^L$ represents the vector of decision variables of the leader, and $\bx^F$ and $\by^F$ the vectors of primal and dual decision variables of the follower. Additionally, $D^L$ and $D^F$ denote the vectors of parameters in the leader and follower problems, respectively. The follower parameters may be differentiated between \textit{public} and \textit{sensitive}, represented by the vectors $D^P$ and $D^S$, respectively, such that $D^F = \left[ D^P, D^S \right]$.
Taking these parameters as input, the upper-level problem minimizes the leader objective cost \eqref{eq:hierarchical0}, constrained by the feasible space of the leader decisions \eqref{eq:hierarchical1}, and the reaction of the follower in the lower-level problem \eqref{eq:hierarchical2} and \eqref{eq:hierarchical3}. 
The lower-level problem minimizes the follower objective cost \eqref{eq:hierarchical2}, constrained by the feasible space of the follower decisions \eqref{eq:hierarchical3}.

\subsubsection*{\textbf{Coordination Variables}}
Note that, the leader decision variables appear as fixed parameters in the expression of the follower feasible space $\cZ^F$. In return, the lower-level problem provides feedback on the follower decision variables to the upper-level problem through its objective function $\cO^L$ and feasible space $\cZ^L$.
The variables shared between the follower and the leader are called \textit{coordination variables}. These variables may be classified into two categories, i.e., \textit{primal} and \textit{dual} variables. In the context of the coordination between two energy markets, primal coordination variables may represent the \textit{quantity} of energy dispatched, while dual coordination variables may represent the market-clearing \textit{prices}. 

The remainder of the paper details the coordination between natural gas and electricity markets to motivate and illustrate the privacy-preserving mechanisms proposed. However, the methods developed may apply more generally to any type of coordination mechanism between sequential and interdependent agents.


\subsection{Privacy Goal}
\label{sec:privacy_goal}

The coordination mechanisms \eqref{eq:hierarchical} requires information exchange between the leader and the follower. 
The paper focuses on the problem arising when the follower parameters $D^F$ contains sensitive information $D^S$ that should not be revealed to the leader. 
In the case of electricity and natural gas markets, these parameters may represent the costs of producers or the demand profile of consumers. If released, they can provide a competitive advantage to the leader, or other strategic agents in the energy system, and may result in financial losses for the follower.

The \emph{privacy goal} is to ensure that the sensitive information $D^S$ contained in the follower parameters $D^F$ is not breached during the coordination process described in Problem \eqref{eq:hierarchical}. The next section introduces a formal notion that will be used to achieve this goal.

\subsection{Differential Privacy}
\label{sec:differential_privacy}

\emph{Differential privacy} \cite{dwork:06} (DP) is a rigorous privacy notion used to protect the participation disclosure of an individual in a computation. 
A randomized mechanism $\cM \!:\! \D \!\to\! \R$ with
domain $\D$ and range $\R$ is $\epsilon$-differential private if, for
any output response $O \subseteq \R$ and any two \emph{neighboring}
datasets $D, D' \in \D$ differing in at most one individual (written
$D \sim D'$),
\begin{equation}
  \label{eq:dp_def} 
  Pr[\cM(D) \in O] \leq \exp(\epsilon)\, Pr[\cM(D') \in O].
\end{equation}
\noindent 
The parameter $\epsilon \geq 0$ is the \emph{privacy loss} of the
mechanisms, with values close to $0$ denoting strong privacy.  

DP satisfies several important properties, including \emph{composability} and immunity to \emph{post-processing}. Composability ensures that a combination of differentially private mechanisms preserve privacy. 
\begin{theorem}[Sequential Composition]
\label{th:seq_composition}
The composition $(\cM_1(\bm{D}), \ldots, \cM_k(\bm{D}))$ of a collection $\{\cM_i\}_{i=1}^k$ of $\epsilon_i$-differential private mechanisms satisfies $(\sum_{i=1}^{k} \epsilon_i)$-differential privacy.
\end{theorem}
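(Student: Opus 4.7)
The plan is a direct unrolling of the definition of $\epsilon$-differential privacy applied to each mechanism in turn, using the independence of the random coins used by the individual $\cM_i$. The composed mechanism $\cM(\bm{D}) = (\cM_1(\bm{D}), \ldots, \cM_k(\bm{D}))$ takes values in the product space $\R_1 \times \cdots \times \R_k$, so I would fix two neighboring databases $D \sim D' \in \D$ and first verify \eqref{eq:dp_def} on product events $O = O_1 \times \cdots \times O_k$ with $O_i \subseteq \R_i$, which already generate the product $\sigma$-algebra.

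First, I would invoke independence of the randomness across the $k$ mechanisms to factor
\begin{equation*}
\Pr[\cM(D) \in O] \;=\; \prod_{i=1}^{k} \Pr[\cM_i(D) \in O_i].
\end{equation*}
Next, I would apply the $\epsilon_i$-DP guarantee of each $\cM_i$ to each factor, giving $\Pr[\cM_i(D) \in O_i] \le \exp(\epsilon_i)\,\Pr[\cM_i(D') \in O_i]$, and multiply across $i=1,\ldots,k$:
\begin{equation*}
\prod_{i=1}^{k} \Pr[\cM_i(D) \in O_i] \;\le\; \exp\!\Bigl(\sum_{i=1}^{k}\epsilon_i\Bigr)\prod_{i=1}^{k}\Pr[\cM_i(D') \in O_i] \;=\; \exp\!\Bigl(\sum_{i=1}^{k}\epsilon_i\Bigr)\Pr[\cM(D') \in O].
\end{equation*}
This establishes the target inequality on all product (rectangular) events.

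Finally, to conclude for an arbitrary measurable output set $O \subseteq \R_1 \times \cdots \times \R_k$, I would extend from rectangles to the full product $\sigma$-algebra by a standard monotone-class / Dynkin $\pi$-$\lambda$ argument, since the collection of measurable sets satisfying the $\exp(\sum_i \epsilon_i)$-ratio bound is closed under disjoint countable unions and contains the $\pi$-system of measurable rectangles. I expect the main conceptual point — rather than an obstacle — to be justifying the factorization step: it relies on the implicit assumption that the $k$ mechanisms use \emph{independent} random coins (the non-adaptive setting). If one instead intends the adaptive variant where $\cM_i$ may depend on the outputs of $\cM_1,\ldots,\cM_{i-1}$, the same argument goes through but with the product replaced by a telescoping product of conditional probabilities, each bounded by $\exp(\epsilon_i)$ via the DP guarantee of $\cM_i$ applied pointwise in the conditioning variable.
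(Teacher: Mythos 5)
The paper does not actually prove this theorem: it is stated as a standard, well-known property of differential privacy imported from the literature, so there is no in-paper proof to compare against. Your argument is the standard textbook proof of non-adaptive sequential composition, and its core --- factoring $\Pr[\cM(D)\in O]$ over independent coins on rectangles, applying \eqref{eq:dp_def} to each factor, and multiplying the $e^{\epsilon_i}$ bounds --- is correct; you are also right to flag that independence of the coins (or, in the adaptive case, a telescoping product of conditional probabilities) is the load-bearing assumption. The one step I would push back on is the extension from rectangles to arbitrary measurable $O$ via a $\pi$--$\lambda$ argument: the collection of sets satisfying $\Pr[\cM(D)\in O]\le e^{\sum_i\epsilon_i}\Pr[\cM(D')\in O]$ is closed under countable \emph{disjoint} unions but is \emph{not} closed under relative complements (subtracting the two inequalities for $A\subseteq B$ goes the wrong way), so it is not a $\lambda$-system and Dynkin's lemma does not apply as stated. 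The clean fixes are either (i) the pointwise likelihood-ratio argument --- bound the density ratio $\prod_i p_{\cM_i(D)}(o_i)/p_{\cM_i(D')}(o_i)$ by $e^{\sum_i\epsilon_i}$ almost everywhere and integrate over $O$, which handles arbitrary measurable sets in one stroke --- or (ii) approximate $O$ by sets in the algebra of finite disjoint unions of rectangles simultaneously with respect to both output distributions. This is a repairable measure-theoretic technicality rather than a conceptual error; the substance of your proof is sound.
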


Immunity to post-processing, ensures that
privacy guarantees are preserved by arbitrary post-processing steps.
\begin{theorem}[Post-Processing \cite{dwork:13}] 
	\label{th:postprocessing} 
	Let $\cM$ be an $\epsilon$-differential private mechanism and $g$ be an arbitrary mapping from the set of possible outputs to an arbitrary set. Then, $g \circ \cM$ \mbox{is $\epsilon$-differential private.}
\end{theorem}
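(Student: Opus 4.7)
The plan is to establish the inequality $Pr[g(\cM(D)) \in O] \leq \exp(\epsilon)\, Pr[g(\cM(D')) \in O]$ for every measurable subset $O$ of the range of $g$ and every pair of neighboring datasets $D \sim D'$. I would first handle the case where $g$ is \emph{deterministic}, which is the conceptual heart of the argument, and then lift the result to randomized $g$ by a standard conditioning-on-internal-randomness step.

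For deterministic $g \colon \R \to \R'$, the key observation is that the event $\{g(\cM(D)) \in O\}$ coincides exactly with the event $\{\cM(D) \in g^{-1}(O)\}$, i.e., the pullback of $O$ through $g$ is a (measurable) subset of $\R$. Applying the $\epsilon$-DP guarantee of $\cM$ from \eqref{eq:dp_def} to the set $g^{-1}(O) \subseteq \R$ gives $Pr[\cM(D) \in g^{-1}(O)] \leq \exp(\epsilon)\, Pr[\cM(D') \in g^{-1}(O)]$, and re-expressing both sides back through $g$ yields the claim.

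For randomized $g$, I would represent $g$ as a deterministic function of its input together with an independent random seed $R$, so that $g \circ \cM$ has the same distribution as $(o,r) \mapsto g(o;r)$ applied to $(\cM(D), R)$ with $R$ independent of $\cM$. Conditioning on the realization $R=r$, the deterministic case applies to the map $g(\cdot;r)$ pointwise in $r$; taking expectations over $R$ then preserves the multiplicative factor $\exp(\epsilon)$ because the bound holds uniformly in $r$, so the inequality survives integration against the distribution of $R$.

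The main obstacle is essentially measure-theoretic bookkeeping rather than any deep inequality: one has to ensure that $g^{-1}(O)$ is measurable whenever $O$ is, and that the internal randomness of $g$ can indeed be taken independently of the randomness used by $\cM$. Both conditions are standard in the DP literature, and once the deterministic case is framed via the preimage the result is immediate. A minor subtlety worth flagging in the write-up is that the statement is only meaningful when $g$ does not depend on the sensitive data $D$ itself; otherwise the reduction to a deterministic $g$ with independent seed fails, and the post-processing guarantee no longer applies.
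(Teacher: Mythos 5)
The paper states this result as a known theorem imported from the literature (citing Dwork and Roth) and gives no proof of its own, so there is nothing internal to compare against. Your argument is the standard and correct one from the cited reference: reduce to deterministic $g$ via the identity $Pr[g(\cM(D)) \in O] = Pr[\cM(D) \in g^{-1}(O)]$ and apply \eqref{eq:dp_def} to the preimage, then handle randomized $g$ by conditioning on an independent seed and integrating the uniform bound. Your closing caveat that $g$ must not depend on the sensitive data is apt and is exactly the condition the paper implicitly relies on when it applies this theorem to the fidelity phase, which uses only public information.
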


In private data analysis settings, 
a function $Q$ (also called \emph{query}) from a data set $D \in \D$ to a result set $R \subseteq \RR^n$ 
can be made differentially private by injecting random noise to its
output. The amount of noise depends on the \emph{sensitivity} of the
query, denoted by $\Delta_Q$ and defined as
\(
\Delta_Q = \max_{D \sim D'} \left\| Q(D) - Q(D')\right\|_1.
\)
In other words, the sensitivity of a query is the maximum
$l_1$-distance between the query outputs of any two neighboring
datasets $D$ and $D'$.

While the classical DP notion protects the individuals presence 
into a dataset, many applications involve components whose {\em
presence} is public information. In the problem of interest, for instance, the gas market participants are known and have non-negative demands. However, the demand values are highly sensitive, as mentioned in the previous section. 
To protect the values associated with these components, the concept of \emph{indistinguishability} is reviewed.

Indistinguishability was introduced in \citep{andres2013geo} to protect user locations in the Euclidean plane and then generalized
in \cite{chatzikokolakis2013broadening,koufogiannis:15} to arbitrary
metric spaces.  Consider a dataset $D$ to which $n$ individuals
contribute their data $x_i$ and $\alpha > 0$. An adjacency relation
that captures the data variation of a single individual is defined as:
\begin{align*}
  D \sim_\alpha D' \iff \exists i: d(x_i, x_i') \leq \alpha \land\; \forall j \neq i: d(x_j, x_j') = 0,
\end{align*}
where $d$ is a distance function on $\D$.  Such adjacency definition
is useful to hide individual participation up to some quantity
$\alpha$. In our application, $\alpha$-indistinguishability allows customers to reveal gas demand profiles that hide the real consumption by a factor of $\alpha$.

Given $\epsilon \geq 0, \alpha > 0$, a randomized mechanism $\cM : \D \to \R$ with domain $\D$ and range $\R$ is \emph{$\alpha$-indistinguishable $\epsilon$-differentially private} ($\alpha$-indistinguishable for short) if, for any event $O \subseteq \R$ and any pair $D \sim_\alpha D'$, ($D,D' \in \D$), Equation \eqref{eq:dp_def} holds.

\section{The PPS Problem}
\label{sec:PPS_problem}

The Privacy-Preserving Stackelber (PPS) problem establishes the fundamental desiderata to be delivered by an obfuscation mechanism. It operates on the follower sensitive parameters $D^S$ exchanged to ensure coordination between the leader and the follower in resolution of Problem $S$ (\Cref{eq:hierarchical}).

\begin{definition}
\label{def:PPS}
Given a Stackelber game $S$ and positive real numbers $\alpha, \eta$, the PPS problem produces a problem $\tilde{S}$ such that:
\begin{enumerate}
	\item \emph{Privacy}: \label{cond:1}
	$\tilde{S}$ satisfies $\alpha$-indistinguishability for the sensitive parameters $D^S$.
	\item \emph{Fidelity}: \label{cond:2}
    The privacy-preserving optimal objective value of each agent, i.e. the leader and/or the follower, is \textit{close} to its original optimal objective value, and the variation in objective value is bounded by parameter $\eta$.
    \item \emph{Consistency}: \label{cond:3}
    The privacy-preserving follower subproblem satisfies the problem constraints \eqref{eq:hierarchical3}.
\end{enumerate}


\end{definition}
The fidelity and consistency conditions make sure that the privacy-preserving coordination mechanism is feasible, and preserve the original cost of the agents.

Prior to introducing a mechanism that achieves the desired conditions above, the paper discusses the application of interest in details: The coordination of sequential electricity and natural gas markets. 

\section{Stackelberg game for the coordination of electricity and gas markets} \label{sec:EG_markets}


The coordination between electricity and natural gas markets can be modeled as a Stackelberg game between a leader, i.e. the \emph{electricity unit commitment} (UC), and a follower, i.e. the sequential clearing of the \emph{electricity market} (EM) and the \emph{natural gas market} (GM), as recently proposed in \cite{byeon2019unit}. This Stackelberg coordination problem is schematically depicted at the bottom of Figure \ref{fig:elec_gas_markets} (which excludes the privacy-preserving mechanism), and referred to as \emph{gas-aware UC} (GAUC). It  represents the sequential order of the agents decision making and the interdependencies between them. 
\begin{figure*}[t]
    \centering
    \includegraphics[width=0.85\linewidth]{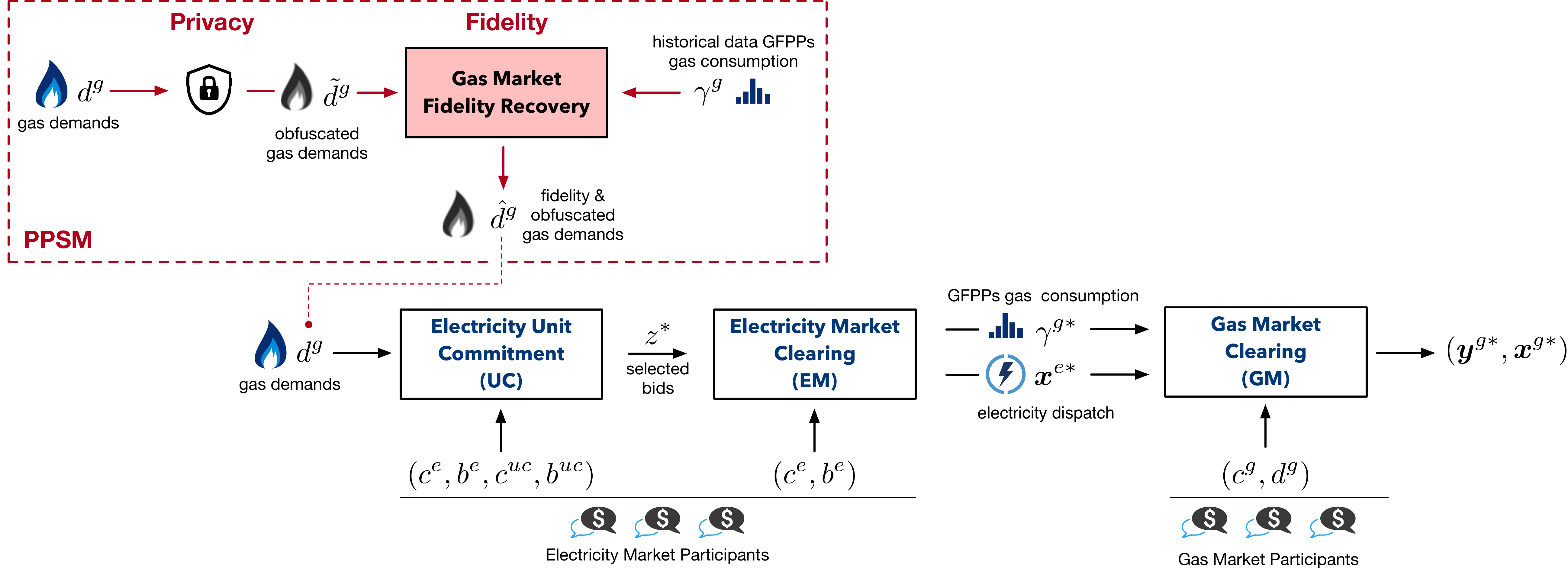}
    \caption{Sequential clearing of electricity unit commitment (UC), electricity and natural gas markets.}
    \label{fig:elec_gas_markets}
\end{figure*}

\subsubsection*{\textbf{GAUC Coordination variables}~($\bm{z}$)}
The GAUC is responsible for committing the bids of the electricity market participants, while anticipating the impact of these decisions on the follower, i.e. the sequential EM and GM clearings. Due to the participation of gas-fired power plants (GFPP)s at the interface between electricity and natural gas systems, the \emph{commitment} $\bm{z}^*$ of the selected bids of GFPPs by the UC impacts both the EM and GM clearings. 

Firstly, the EM uses the bids commitment $\bm{z}^*$ of the EM participants as input and dispatches these \textit{selected bids} to maximize the social welfare of the power system. Secondly, the GM dispatches its participants bids in order to maximize the social welfare in the natural gas system.

\subsubsection*{\textbf{EM and GM Coordination variables}~($\bm{x}^e$ and $\bm{y}^g$)}
The electricity dispatch $\bm{x}^{e^*}\!\!$ of GFPPs is directly linked to their gas consumption $\gamma^{g^*}\!\!$, which is used as input to the GM clearing problem. In return, the reaction of the follower, i.e., the electricity dispatch $\bm{x}^{e}$ and gas prices $\bm{y}^{g}$, are accounted in the leader decisions through the \textit{objective cost} $\cO^{uc}$ and feasible space of the UC problem $\cZ^{uc}$. 

Firstly, the UC problem objective accounts for the dispatch cost of the electricity bids. 
Secondly, the UC problem feasible space accounts for \emph{bid-validity constraints} which embed the interdependencies 
between the gas prices $\bm{y}^g$ and the marginal 
electricity production cost of GFPPs.
In practice, these constraints enforce the price of the last selected GFPPs bid to be no larger than their marginal electricity production cost. These bidirectional interdependencies between the leader and the follower represent a classic Stackelberg game.

\subsubsection*{\textbf{GAUC problem}}
Therefore, the compact formulation of the GAUC as a hierarchical optimization problem is as follows:
\begin{subequations} \label{3level}
	\begin{alignat}{7}
	\cP^{uc} = 
	& \min_{\overset{\bm{z}\in\{0,1\}^N}{\underset{\bm{x}^e ,  \bm{y}^g \geq \bm{0}}{}}} 
	&&  \;
	\cO^{\text{uc}} = 
	c^{uc^\top}\!\bm{z}  + c^{e^\top}\!\bm{x}^e \label{3level0} \\
	& \quad \text{s.t.} && \bm{z} \in \cZ^{uc} \label{3level1} \\
	& \quad                && A^{uc}  \bm{z}    +     B^{uc}  \bm{y}^g    \geq   \ b^{uc} \label{3level2} \\
	& \quad                &&  \bm{x}^e , \bm{y}^g              \in      \text{primal and dual sol. of } \eqref{bilevel_heat_elec},  \label{3level3}
	\end{alignat}
\end{subequations}
where $N$ is the dimensionality of the commitment vector. 
The leader's objective aims at minimizing the electricity system operating cost, which includes \emph{no-load} and \emph{start-up} costs $c^{uc^\top}\!\bm{z}$, and the \emph{cost of dispatching} the electricity bids $c^{e^\top}\!\bm{x}^e$, constrained by the techno-economic characteristics of electricity suppliers \eqref{3level1}, and the bid-validity constraints \eqref{3level2}. 

\subsubsection*{\textbf{EM and GM Clearing Problems}}
Furthermore, due to the sequential order of EM and GM clearings, the follower's problem \eqref{3level3} can be expressed as a hierarchical optimization problem, such that:
\begin{subequations} \label{bilevel_heat_elec}
\begin{alignat}{7}
\cP^{e} = 
	& \min_{\bm{x}^e ,  \bm{y}^g}  \;
		&& \cO^{e} = c^{e^\top}\!\bm{x}^e  
		\label{bilevel_heat_elec0} \\
		& \quad    \text{s.t.}   && \bm{x}^e \in \cZ^e  
		\label{bilevel_heat_elec1}  \\
		& \quad     &&  A^{e}  \bm{x}^e + B^{e}    \bm{z} \geq b^{e}
		\label{bilevel_heat_elec1.2} \\
		& \quad  	&&   \bm{y}^g   \in   \text{dual sol. of }
	&&  \cP^{g} =  \min_{\bm{x}^g} \;
		\cO^{g} = c^{g^\top}\!\bm{x}^g 
		\tag{5a} \label{bilevel_heat_elec2}  \\
		& \quad 	&&  \quad   &&  \  
		\text{s.t. } \quad \bm{x}^g  \in \cZ^g \tag{5b}
		\label{bilevel_heat_elec4} \\
		& \quad 	&&  \quad   &&  \  
		\phantom{s.t. } \quad  A^{g}  \bm{x}^g   +  B^{g}  
		\bm{x}^e = d^{g}. \tag{5c}\label{bilevel_heat_elec5} 
\end{alignat}
\end{subequations}
\setcounter{equation}{5}
The objective \eqref{bilevel_heat_elec0} of the middle-level problem is to minimize the electricity dispatch cost, constrained by linearized power flow constraints \eqref{bilevel_heat_elec1} and bounds on the selected bids $\bm{z}$ of electricity suppliers \eqref{bilevel_heat_elec1.2}.
The lower-level problem \eqref{bilevel_heat_elec2}--\eqref{bilevel_heat_elec5} represents the GM clearing, which seeks to minimize the natural gas dispatch cost \eqref{bilevel_heat_elec2}, constrained by linearized gas flow constraints and the bounds on natural gas supply and demand bids \eqref{bilevel_heat_elec4}, as well as the nodal gas balance equation \eqref{bilevel_heat_elec5}. 

The detailed expressions of these optimization problems, as well as the matrices $A^{uc}$, $B^{uc}$, $A^{e}$, $B^{e}$, $A^{g}$, $B^{g}$, the vectors $c^{uc}$, $b^{uc}$, $c^{e}$, $b^{e}$, $c^{g}$, $d^{g}$, and the polytopes $\cZ^{uc}$, $\cZ^{e}$, $\cZ^{g}$ is derived from \cite{byeon2019unit} and provided in the online Appendix \cite{onlineappendix}.

\subsubsection*{\textbf{Public and sensitive information}}
As illustrated in Figure \ref{fig:elec_gas_markets} (bottom), the UC problem takes as input the techno-economic characteristics of electricity 
suppliers, which are represented in Problem \eqref{3level} by the 
matrices $A^{uc}$ and $B^{uc}$, the vectors $c^{uc}$ and $b^{uc}$, 
and the polytope $\cZ^{uc}$.
As the UC and EM cover the same energy system, these two agents also 
take the same parameters as input, which include the price $c^e$ and quantity $b^e$ bids of electricity suppliers, electricity demand profile, and electricity network technical characteristics. 
These parameters are represented in Problem \eqref{bilevel_heat_elec} by the matrices $A^{e}$ and $B^e$, the vectors $c^e$ and $b^e$, and the polytope $\cZ^{e}$. Moreover, the UC takes as input the parameters of the the GM 
follower, which include the price $c^g$ and quantity bids of gas suppliers, gas demand profile $d^g$, and gas network technical characteristics. These parameters are represented in Problem \eqref{bilevel_heat_elec} by the matrices $A^g$ and $B^g$, the vectors $c^g$ and $d^g$, and the polytope $\cZ^{g}$. 
In particular, the gas demand profile, $d^g = \left[d_{j}^g:\forall j \in \mathcal{V} \right]$ at all nodes $j \in \mathcal{V}$ of the gas network, represents the sensitive information (referred to as $D^S$ in Problem (1)) of the GM follower.


In contrast, the following outcomes of the GM clearing are traditionally considered \emph{publicly available} information: the original objective value $\cO^{g^*}$ and natural gas prices $\by^{g^*}$. The mechanism introduced next will leverage this public information to restore fidelity of the PPS problem.

\section{The privacy-preserving Stackelberg mechanism (PPSM)}
\label{sec:PPSM}

The \emph{Privacy-Preserving Stackelberg Mechanism} (PPSM) aims at protecting the \emph{privacy} of the gas demand profile $d^g$ and achieve the \textit{consistency} and \textit{fidelity} of the obfuscated data $\hat{d}^{g}$ in the context of the coordination between electricity and natural gas markets.

\subsection{Overview} 

A schematic representation of the mechanism is provided in
at the top of Figure \ref{fig:elec_gas_markets}. PPSM operates in three phases, described as follows:
\begin{enumerate}
    \item The \emph{privacy phase} takes as input the original, sensitive gas profile $d^g$, and produces a new, \emph{obfuscated}, natural gas demand profile $\tilde{d}^g$ that is $\alpha$-indistinguishable from $d^g$.
    
    \item The \emph{fidelity phases} redistribute the noise introduced in the privacy phase to obtain a privacy-preserving profile $\tilde{d}^g$ and produces a new, privacy-preserving, gas demand profile $\hat{d}^g$ that satisfies the original GM constraints and renders the GM objective $\tilde{\cO}^{{g}^*}$ and/or the GAUC objective $\tilde{\cO}^{{uc}^*}$ \emph{faithful} to their original counterparts, ${\cO}^{{g}^*}$ and $\cO^{{uc}^*}$.

    \item Finally, the mechanism uses the privacy-preserving and faithful gas demand profile $\hat{d}^g$ as input to solve the GAUC problem introduced in the Section \ref{sec:EG_markets}.
\end{enumerate}

\subsection{Privacy Phase}

The PPSM privacy phase takes as input the original vector of gas demand profile $d^g$ and constructs a privacy-preserving version $\tilde{d}^g$ using the Laplace mechanism. 
  
\begin{theorem}[Laplace Mechanism \cite{dwork:06}]
\label{th:m_lap} 
Let $Q$ be a numeric query that maps datasets to $\RR^n$. The Laplace mechanism that outputs $Q(D) + \xi$, where $\xi \in \RR^n$ is drawn from the Laplace distribution $\Lap(\Delta_Q / \epsilon)^n$, achieves $\epsilon$-differential privacy.
\end{theorem}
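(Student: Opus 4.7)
The plan is to verify the differential privacy condition (the inequality in equation~(2)) directly through a pointwise density ratio argument, which is the standard route for proving the Laplace mechanism. Fix any pair of neighboring datasets $D \sim D'$ and any measurable set $O \subseteq \RR^n$. Since $\cM(D) = Q(D) + \xi$ with $\xi$ drawn from $\Lap(\Delta_Q/\epsilon)^n$ (independent coordinates), the output distribution admits the density
\[
p_D(z) = \prod_{i=1}^n \frac{\epsilon}{2\Delta_Q}\exp\!\left(-\frac{\epsilon\, |z_i - Q(D)_i|}{\Delta_Q}\right)
\]
with respect to Lebesgue measure on $\RR^n$, and analogously for $p_{D'}(z)$.

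Next I would bound the pointwise likelihood ratio. Taking $p_D(z)/p_{D'}(z)$ the normalizing constants cancel, leaving
\[
\frac{p_D(z)}{p_{D'}(z)} = \exp\!\left(\frac{\epsilon}{\Delta_Q}\sum_{i=1}^n \bigl(|z_i - Q(D')_i| - |z_i - Q(D)_i|\bigr)\right).
\]
By the reverse triangle inequality, each summand is at most $|Q(D)_i - Q(D')_i|$, so the exponent is bounded above by $(\epsilon/\Delta_Q)\,\|Q(D) - Q(D')\|_1$. Invoking the definition of the $l_1$-sensitivity $\Delta_Q = \max_{D \sim D'}\|Q(D) - Q(D')\|_1$ gives $p_D(z) \le \exp(\epsilon)\, p_{D'}(z)$ for every $z \in \RR^n$.

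To conclude, I would integrate this pointwise bound over $O$, obtaining
\[
\Pr[\cM(D) \in O] = \int_O p_D(z)\,dz \le \exp(\epsilon) \int_O p_{D'}(z)\,dz = \exp(\epsilon)\,\Pr[\cM(D') \in O],
\]
which is exactly the defining inequality of $\epsilon$-differential privacy from equation~(2). Since $D$ and $D'$ were arbitrary neighbors, the mechanism is $\epsilon$-differentially private.

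I do not expect any single step to pose a real obstacle; the proof is essentially a computation. The only point requiring care is to ensure the reverse triangle inequality is applied in the correct direction so that the bound is symmetric in $D$ and $D'$, and to handle measurability cleanly (it suffices to work with Borel sets $O$, on which the product Laplace density is well defined).
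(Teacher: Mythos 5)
The paper does not prove this theorem; it is imported verbatim from the cited reference \cite{dwork:06}. Your density-ratio argument is the standard, correct proof of that result: the pointwise bound via the reverse triangle inequality and the $l_1$-sensitivity, followed by integration over $O$, is exactly the canonical derivation, so there is nothing to compare against and no gap to report.
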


\noindent
In the above, $\Lap(\lambda)$ denotes the Laplace distribution with 0
mean and scale $\lambda$, and $\Lap(\lambda)^n$ denotes the
i.i.d.~Laplace distribution over $d$ dimensions with parameter
$\lambda$. The Laplace mechanism with parameter
$\lambda=\alpha/\epsilon$ satisfies
$\alpha$-indistinguishability \cite{chatzikokolakis2013broadening}. 
As a result, the privacy-preserving gas profile values $\tilde{d}^g$ are obtained as follows:
\begin{equation} \label{eq:setp1}
	\tilde{d}^g  = d^g + \Lap(\alpha/\epsilon)^N,
\end{equation}
where $N$ is the appropriate dimensionality of the gas profile vector, $\alpha >0$ is the \emph{indistinguishability} value (e.g., the amount of MWh we want to protect), and $\epsilon \geq 0$ is the privacy loss (with typical values within $\ln(1.1)$ to $\ln(10)$). 
Importantly, the Laplace mechanism has been shown to be \emph{optimal}: it minimizes the mean-squared error for identity queries with respect to the L1-norm \cite{koufogiannis:15}.


While \eqref{eq:setp1} ensures $\alpha$-indistinguishability in the 
$\epsilon$-DP framework, the obfuscated data may not achieve \textit{consistency} or strong \emph{fidelity} with respect to the original problem. Crucially, demand profiles generated by this mechanism often fail to produce a feasible solution to the GAUC problem, as illustrated in Section \ref{sec:experiments}.

\subsection{Fidelity Phase} 
\label{sec:fidelity_phase}

To remedy such limitation, the proposed PPSM introduces an \emph{optimization-based post-processing step} that aims at establishing consistency and fidelity with respect to the agents constraints and objectives. 
The goal of the fidelity phase is that of producing a new gas demand profile $\hat{d}^g$ that satisfies the GM constraints and is faithful to the GM and/or GAUC objective values. Indeed, while both agents have competing objectives, the PPSM should provide fidelity guarantees to both agents in order to be applicable.

Therefore this paper introduces two fidelity approaches, one in which the fidelity criteria focuses on the GM cost, and another where it focuses on the GAUC cost. As previously discussed, the original GM cost $\cO^{g^*}$ is traditionally public information, therefore the PPSM can directly restore fidelity on this value.
However, due to the sequential order of the GAUC and GM clearing, the original GAUC cost $\cO^{uc^*}$ is only revealed after the privacy-preserving data $\hat{d}^g$ has been released. Therefore, fidelity on this value cannot directly be restored.
Instead, the original coordination variables, i.e. the natural gas prices $\by^{g^*}$ are public information. As these coordination variables provide feedback from the GM problem to the GAUC problem, it is expected that restoring fidelity on these variables ensures fidelity on the GAUC objective. 
These two fidelity approaches are detailed below.

\subsubsection{\textbf{GM cost: Primal fidelity ($PPSM_p$)}}
\label{sec:p_fidelity_phase}

The first post-processing phase 
aims at restoring feasibility of the natural gas problem and establishing fidelity of the \emph{privacy-preserving} GM objective cost $\hat{\cO}^g = c^{g^\top}\!\hat{\bx}^{g}$ with respect to its 
counterpart $\cO^{g^*}$ computed using the original demand profile. 

However, the solutions and objective of GM problem are affected not only by the gas demand $d^g$, but also by the gas consumption $\bx^e$ of GFPPs, which is the solution of the EM problem (see \Cref{bilevel_heat_elec5}). As highlighted in Figure \ref{fig:elec_gas_markets}, as the GAUC and EM are cleared after the release of the privacy-preserving data $\hat{d}^g$, historical data $\bar{x}^e$ can be used as input to the optimization-based post-processing phase in place of a solution $\bx^{e^*}$ to the EM problem. In practice, past observations on the gas and electricity demand profile 
can be publicly used to accurately predict the future gas consumption of GFPPs, denoted $\hat{x}^{e}$. For the remainder of this paper, it is assumed that these historical observations provide an accurate prediction of the future gas consumption.

The Primal fidelity post-processing phase redistributes the noise introduced by the Laplace mechanism on the obfuscated demand profile $\tilde{d}^g$ to generate a new load profile $\hat{d}^g$ through the following bilevel optimization program:
\begin{subequations} \label{eq:pp2}
\begin{align}
	\min_{\hat{\bd}^g, \hat{\bx}^g} \;\;&
		\| \hat{\bd}^g - \tilde{d}^g\|^2_2 
		\label{eq:pp2_0} \\
	 \text{s.t.} \;\;&
	| c^{g^\top}\!\hat{\bx}^g - \mathcal{O}^{{g}^*} | \leq \eta 
	\label{eq:pp2_1} \\
	& \hat{\bx}^g = \text{ primal sol. of } 
	\cP^g(\bar{x}^{e},\hat{\bd}^g). \label{eq:pp2_2}
\end{align}
\end{subequations}

The problem minimizes the distance of the post-processed demand profile $\hat{d}^g$ to the Laplace-obfuscated ones $\tilde{d}^g$ \eqref{eq:pp2_0}, while enforcing fidelity of the objective cost with respect to the original loads profiles \eqref{eq:pp2_1}. 
The public data $\mathcal{O}^{g^*}$ represents the objective cost of the GM clearing problem $\mathcal{P}^g\left(\hat{x}^{e},d^g\right)$ with the original demand profile, and collected from historical data. 
The gas dispatch variables $\hat{\bm{x}}^g$ are defined in the lower-level problem \eqref{eq:pp2_2} as the solution of the GM clearing problem $\mathcal{P}^g(\hat{x}^{e}, \hat{\bd}^g)$ (see \Cref{bilevel_heat_elec2,bilevel_heat_elec4,bilevel_heat_elec5}) with the post-processed demand profile $\hat{\bd}^g$. 
Using the equivalent Karush-Kuhn-Tucker (KKT) conditions of the linear lower-level problem \eqref{eq:pp2_2} and the Fortuny-Amat linearization, this bilevel problem can be recast as a mixed integer linear program (MILP) \cite{gabriel2012complementarity}. 
The reformulation is detailed in the online Appendix \cite{onlineappendix}.

\subsubsection{\textbf{GAUC cost: Dual fidelity ($PPSM_d$)}}
\label{sec:d_fidelity_phase}

While the fidelity phase described above enforces fidelity of the GM objective with respect to the original demand profile, it ignores the impact of the post-processed data on the UC objective (leader) problem. 

To address this limitation, this section introduces an additional fidelity algorithm that enforces fidelity of the coordination variables $\bm{y}^g$. 
Recall that the GM clearing problem provides feedback to the UC problem via these coordination variables. Therefore, this approach allows us to enforce fidelity on both the UC and GM costs. The resulting fidelity phase redistributes the noise introduced by the Laplace mechanism on the obfuscated demand profile $\tilde{d}^g$ to generate a new load profile $\hat{d}^g$ through the following bilevel optimization program:
\begin{subequations} \label{eq:pp3}
\begin{align}
	\min_{\hat{\bd}^g, \hat{\bx}^g, \hat{\by}^g} \;\;& 
	\| \hat{\bd}^g - \tilde{d}^g\|^2_2 \label{eq:pp3_0} \\
	\text{s.t.}\;\;& 
	|  \hat{\by}^g - \bar{y}^{g^*} | \leq \eta \label{eq:pp3_1b} \\
& \{\hat{\bx}^g , \hat{\by}^g \} = \text{ primal and dual sol. of } \mathcal{P}^g(\bar{x}^e, \hat{\bd}^g). \label{eq:pp3_2}
\end{align}
\end{subequations}

Similarly to the \emph{primal fidelity}, this problem seeks to minimize the distance between the post-processed demand $\hat{d}^g$ to the Laplacian obfuscated one $\tilde{d}^g$ \eqref{eq:pp3_0}. 
At the same time, it enforces fidelity of the coordination variables with respect to the original loads profiles  \eqref{eq:pp3_1b}. 
The \emph{public data} $\bar{y}^{g^*}$ represents the natural gas prices, associated with the original demand profile. In practice, in order to ensure transparency of energy markets, this information is indeed publicly revealed by market operators, as it represents the price at which energy is bought and sold at each node of the network.
The natural gas dispatch variables $\bx^g$ and prices $\hat{\by}^g$ are defined in the lower-level problem \eqref{eq:pp3_2} as the primal and dual solutions of the GM clearing problem $\mathcal{P}^g(\bar{x}^e, \hat{d}^g)$ with the post-processed demand profile. Similarly to the primal fidelity problem \eqref{eq:pp2}, this bilevel problem can be recast as an MILP.

\begin{theorem}
For a given $\epsilon \geq 0$,  $\alpha >0$, and $\eta >0$, 
PPSM satisfies the conditions of the PPS problem (\Cref{def:PPS}).
\end{theorem}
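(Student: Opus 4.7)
The plan is to verify the three conditions of \Cref{def:PPS} in turn, each following essentially from the construction of the corresponding PPSM phase, so the argument is structural rather than computational.

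For \emph{privacy}, I would first apply \Cref{th:m_lap} to the identity query on $d^g$: under the adjacency $\sim_\alpha$ this query has sensitivity $\alpha$, so calibrating the Laplace noise with scale $\alpha/\epsilon$ as in \eqref{eq:setp1} gives $\tilde{d}^g$ the $\alpha$-indistinguishable $\epsilon$-differential privacy guarantee. The subsequent fidelity phase maps $\tilde{d}^g$ to $\hat{d}^g$ via \eqref{eq:pp2} or \eqref{eq:pp3}; its only additional inputs are the quantities $\bar{x}^e$, $\mathcal{O}^{g^*}$, and $\bar{y}^{g^*}$ that the paper identifies as publicly available, and hence act as side information. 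This renders the whole fidelity phase a post-processing of $\tilde{d}^g$, so \Cref{th:postprocessing} preserves $\alpha$-indistinguishability. Solving the GAUC on $\hat{d}^g$ is similarly post-processing, leaving the end-to-end mechanism $\alpha$-indistinguishable in $D^S = d^g$.

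For \emph{consistency}, the lower-level constraint \eqref{eq:pp2_2} (respectively \eqref{eq:pp3_2}) forces $\hat{\bx}^g$ (and, in the dual variant, $\hat{\by}^g$) to be an actual primal (primal-dual) solution of $\mathcal{P}^g(\bar{x}^e, \hat{\bd}^g)$; such a solution can only exist if the feasible set of the GM at $\hat{\bd}^g$ is non-empty, so $\hat{d}^g$ must satisfy the follower constraints \eqref{bilevel_heat_elec4}--\eqref{bilevel_heat_elec5}, which is precisely condition~3. For \emph{fidelity}, the explicit inequality \eqref{eq:pp2_1} in $PPSM_p$ imposes $|\hat{\mathcal{O}}^g - \mathcal{O}^{g^*}| \le \eta$ on the follower, while \eqref{eq:pp3_1b} in $PPSM_d$ controls $|\hat{\by}^g - \bar{y}^{g^*}| \le \eta$; since the follower prices enter the leader problem only through the bid-validity constraints \eqref{3level2}, a bound on $\hat{\by}^g$ can be propagated via the fixed matrix $B^{uc}$ into a corresponding bound on the induced change in the GAUC objective, covering the leader side of condition~2.

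The main obstacle is to confirm that each post-processing optimization is itself feasible, so that a valid $\hat{d}^g$ always exists for any choice of $\eta$. I would dispatch this by observing that the original demand profile is itself always a candidate: substituting $\hat{d}^g = d^g$ recovers the original primal-dual pair $(\bx^{g^*}, \by^{g^*})$ in the lower level, and both fidelity inequalities then hold with slack $0 \le \eta$. Hence the feasible region of \eqref{eq:pp2} or \eqref{eq:pp3} is non-empty for every $\eta > 0$, and all three conditions of \Cref{def:PPS} are simultaneously met.
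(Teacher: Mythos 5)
Your proof is correct and follows essentially the same route as the paper's: the Laplace mechanism calibrated with scale $\alpha/\epsilon$ gives $\alpha$-indistinguishability, post-processing immunity (\Cref{th:postprocessing}) carries it through the fidelity phase since that phase consumes only public side information, and the fidelity and consistency conditions are read off directly from constraints \eqref{eq:pp2_1}/\eqref{eq:pp3_1b} and \eqref{eq:pp2_2}/\eqref{eq:pp3_2}. Your extra observation that the post-processing program is always feasible (with $\hat{d}^g = d^g$ as a witness achieving zero slack) is a useful addition the paper omits here, though note it implicitly relies on the paper's standing assumption that the historical $\bar{x}^e$ accurately predicts $x^{e^*}$, since otherwise $\mathcal{P}^g(\bar{x}^e, d^g)$ need not reproduce $\mathcal{O}^{g^*}$ or $\bar{y}^{g^*}$.
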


\begin{proof}
The privacy phase of PPSM produces a gas demand profile $\tilde{d}^g$ using the Laplace mechanism with parameter $\lambda = \alpha / \epsilon$ (\Cref{eq:setp1}). By \Cref{th:m_lap}, the resulting demand profile satisfies $\alpha$-indistinguishability for the given privacy parameter $\epsilon$. Therefore, PPSM satisfies the \emph{privacy} requirement of the PPS problem (\Cref{def:PPS}).

Notice that the PPSM fidelity phase takes as input the privacy-preserving gas profiles $\tilde{d}^g$ generated during the privacy phase, and uses exclusively the public information---e.g., the problem model and the historical data $\hat{x}^e$ (primal fidelity), or the natural gas prices $\bar{y}^{g^*}$ (dual fidelity). 
Therefore, by the post-processing immunity property of differential privacy (\Cref{th:postprocessing}) the output $\hat{d}^g$ of the fidelity phase satisfies $\alpha$-indistinguishability.

The PPS \emph{fidelity} and \emph{consistency} requirements are ensured directly by the PPSM fidelity step. In particular the fidelity requirement is satisfied by Constraints \eqref{eq:pp2_2} and \eqref{eq:pp3_1b} in the primal and dual fidelity steps, respectively. The consistency requirement is satisfied by Constraints \eqref{eq:pp2_2} and \eqref{eq:pp3_2}, in the primal and dual fidelity steps, respectively. 
\end{proof}

\section{The Cost of Privacy}

The section analyzes the \emph{cost of privacy} as the impact of the data perturbation induced by the privacy-preserving mechanism and the optimal objective cost of the follower agent in the privacy-preserving problem. 
The results below hold under the (very mild) assumption that the historical values $\bm{\hat{x}^{e}}$ used in the primal and dual fidelity phases are accurate. 

\begin{theorem}
\label{thm:error}
  After the fidelity phase, the expected error induced by $PPSM_p$ or $PPSM_d$ on the gas demand values $\hat{d}^g$ is bounded by the inequality: 
  $$
    \EE\left[ \| \hat{\bd}^{g^*} - d^g \| \right] \leq 
    4 \dfrac{\alpha^2}{\epsilon^2}, 
  $$
  where $\hat{\bd}^{g^*}$ is the solution to Problem \eqref{eq:pp2} or \eqref{eq:pp3}, and
  $d^g$ is the original profile. 
\end{theorem}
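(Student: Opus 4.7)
The plan is to leverage the fact that the true demand $d^g$ is itself a feasible point of the fidelity problem, and exploit the optimality of $\hat{\bd}^{g^*}$ as the minimizer of the squared Euclidean distance to $\tilde{d}^g$. First, I would verify feasibility of $d^g$ in Problem \eqref{eq:pp2} (and analogously in Problem \eqref{eq:pp3}). Under the stated mild assumption that the historical $\bar{x}^e$ matches the true GFPP gas consumption, the pair $(d^g, x^{g^*})$ (together with $y^{g^*}$ in the dual case) satisfies the lower-level constraint \eqref{eq:pp2_2} (resp.\ \eqref{eq:pp3_2}) by construction of $x^{g^*}$ as the primal solution of $\cP^g(\bar{x}^e, d^g)$. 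The fidelity constraint \eqref{eq:pp2_1} (resp.\ \eqref{eq:pp3_1b}) is then satisfied with zero slack, since $c^{g^\top}\!x^{g^*} = \cO^{g^*}$ (resp.\ $\hat{y}^g = \bar{y}^{g^*}$), which sits comfortably within any positive tolerance $\eta$.

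Next, the privacy phase produces $\tilde{d}^g = d^g + \xi$ with $\xi \sim \Lap(\alpha/\epsilon)^N$. Because $d^g$ is feasible and $\hat{\bd}^{g^*}$ minimizes $\|\,\cdot\,-\tilde{d}^g\|_2^2$ over the feasible set, optimality yields
\begin{equation*}
\|\hat{\bd}^{g^*} - \tilde{d}^g\|_2 \;\leq\; \|d^g - \tilde{d}^g\|_2 \;=\; \|\xi\|_2.
\end{equation*}
Applying the triangle inequality then gives
\begin{equation*}
\|\hat{\bd}^{g^*} - d^g\|_2 \;\leq\; \|\hat{\bd}^{g^*} - \tilde{d}^g\|_2 + \|\tilde{d}^g - d^g\|_2 \;\leq\; 2\|\xi\|_2.
\end{equation*}
Squaring, taking expectations, and plugging in the second moment $\EE[\xi_i^2] = 2(\alpha/\epsilon)^2$ of the Laplace distribution would deliver the stated $4\alpha^2/\epsilon^2$ bound (interpreted per coordinate, with the aggregate version picking up a factor of $N$).

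The main obstacle is that I cannot invoke the sharper Pythagorean inequality available for projections onto \emph{convex} sets: the fidelity feasible region is non-convex because the lower-level KKT reformulation and its Fortuny--Amat linearization introduce complementarity and binary variables. Consequently the argument must rely purely on the one-sided optimality comparison above, and a Jensen-style tightening is unavailable. A secondary subtlety is that the historical prediction $\bar{x}^e$ only approximates the true $x^{e^*}$; any residual mismatch has to be absorbed either by the accuracy assumption on $\bar{x}^e$ or by the slack $\eta$ in the fidelity constraints, which is exactly why the theorem statement flags this as a mild but necessary hypothesis.
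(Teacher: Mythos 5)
Your proof follows essentially the same route as the paper's: feasibility of the original profile $d^g$ (with its associated primal/dual solution) in the fidelity program, the one-sided optimality comparison $\|\hat{\bd}^{g^*}-\tilde{d}^g\|_2\leq\|d^g-\tilde{d}^g\|_2$, the triangle inequality, and the Laplace variance to close the bound. You are in fact somewhat more careful than the paper on the two delicate points --- the role of the accuracy assumption on $\bar{x}^e$ in establishing feasibility, and the per-coordinate versus aggregate (and norm versus squared-norm) bookkeeping in the final step, which the paper's own proof glosses over when passing from $2\|\tilde{d}^g-d^g\|_2$ to $4\alpha^2/\epsilon^2$.
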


The proof of the theorem above is analogous, in spirit, to the proof of Theorem 5 in \cite{fioretto:CPAIOR-18}.

\begin{proof}
  Denote with $\tilde{d}^g = d^g + \Lap(\alpha/\epsilon)$ the privacy-preserving version of the gas demand profile obtained by the Laplace mechanism. We have that:
  \begin{align*}
  \| \hat{\bd}^{g^*} - d^g \|_{2} &\leq \| \hat{\bd}^{g^*} - \tilde{d}^g \|_{2} + \| \tilde{d}^g - d^g \|_{2} \label{eq:p2}\\
            &\leq 2 \|\tilde{d}^g - d^g\|_{2} \leq 4 \dfrac{\alpha^2}{\epsilon^2}.
  \end{align*}
  The first inequality follows from the triangle inequality on norms. 
  The second inequality follows from:
  $$
      \| \hat{\bd}^{g^*} - \tilde{d}^g \|_{2} \leq \|\tilde{d}^g - d^g\|_{2}
  $$
  by optimality of 
  $\langle \hat{\bd}^{g^*}, \hat{\bx}^*\rangle$ 
  and the fact that 
  $\langle d^g, \bx^*\rangle$ is a feasible solution to constraints 
  \eqref{eq:pp2_1} and \eqref{eq:pp2_2} (Problem \ref{eq:pp2}), 
  or, similarly, by optimality of
  $\langle \hat{\bd}^{g^*}, \hat{\bx}^*, \hat{\by}^*\rangle$ 
  and the fact that 
  $\langle d^g, \bx^*, \by^*\rangle$ is a feasible solution to constraints 
  \eqref{eq:pp3_1b} and \eqref{eq:pp3_2} (Problem \ref{eq:pp3}).
  The third inequality follows directly from the variance of the Laplace distribution.
\end{proof}

\begin{theorem} \label{th2}
	After the fidelity phase, the error induced by $PPSM_p$ on the follower's objective is bounded in expectation by:
  $$ 
  \EE \left[ | \cO^{g*} - \hat{\cO}^{g^*} | \right] \leq 
  \dfrac{2\sqrt{2}\alpha}{\epsilon}|\bm{y}^{g^*} + 
  \eta|,
  $$  
  where $\bm{y}^{g^*} $ represents the natural gas prices of the GM clearing problem with the original demand profile $d^g$.
\end{theorem}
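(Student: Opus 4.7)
The plan is to exploit LP duality on the gas market subproblem $\cP^g$, which links the change in the follower's objective $\hat{\cO}^{g^*}-\cO^{g^*}$ to the demand perturbation $\hat{d}^g - d^g$ scaled by the dual gas prices $\bm{y}^{g^*}$, and then to take expectations using the Laplace variance, in the same spirit as the proof of Theorem~\ref{thm:error}.

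First, I would exploit the key structural fact that in $\cP^g$ (equations \eqref{bilevel_heat_elec2}--\eqref{bilevel_heat_elec5}) the demand $d^g$ enters only through the right-hand side of the nodal balance constraint \eqref{bilevel_heat_elec5}, so the dual feasible region does not depend on $d^g$. By strong LP duality, $\cO^{g^*} = \bm{y}^{g^{*\top}} d^g$ and $\hat{\cO}^{g^*} = \hat{\bm{y}}^{g^{*\top}} \hat{d}^g$. Since $\bm{y}^{g^*}$ is dual-feasible for $\cP^g(\bar{x}^e, \hat{d}^g)$ and $\hat{\bm{y}}^{g^*}$ is dual-feasible for $\cP^g(\bar{x}^e, d^g)$, weak duality applied on both sides yields the sandwich
\begin{align*}
\hat{\bm{y}}^{g^{*\top}}(\hat{d}^g - d^g) \;\leq\; \hat{\cO}^{g^*} - \cO^{g^*} \;\leq\; \bm{y}^{g^{*\top}}(\hat{d}^g - d^g),
\end{align*}
so by Cauchy--Schwarz $|\cO^{g^*} - \hat{\cO}^{g^*}| \leq \max(\|\bm{y}^{g^*}\|_2,\|\hat{\bm{y}}^{g^*}\|_2)\,\|\hat{d}^g - d^g\|_2$.

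Second, I would use the primal fidelity constraint \eqref{eq:pp2_1}, which enforces $|c^{g^\top}\hat{\bx}^{g^*}-\cO^{g^*}| \leq \eta$, to control the perturbed dual. Because the LP value function $d \mapsto \cP^g(\bar{x}^e,d)$ is convex and piecewise linear with $\bm{y}^{g^*}$ as a subgradient at $d^g$, the $\eta$-slack in the optimal value translates into a deviation of $\hat{\bm{y}}^{g^*}$ from $\bm{y}^{g^*}$ that is bounded by $\eta$, giving the coarse but sufficient bound $\max(\|\bm{y}^{g^*}\|,\|\hat{\bm{y}}^{g^*}\|) \leq \|\bm{y}^{g^*}\| + \eta$, which matches the factor $|\bm{y}^{g^*}+\eta|$ in the claim. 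Third, I would take expectation using the Laplace tail. From the proof of Theorem~\ref{thm:error}, $\|\hat{d}^g-d^g\|_2 \leq 2\|\tilde{d}^g-d^g\|_2 = 2\|\xi\|_2$ with $\xi \sim \Lap(\alpha/\epsilon)$. Since $\EE[\xi^2] = 2(\alpha/\epsilon)^2$, Jensen's inequality gives $\EE[\|\xi\|_2]\leq \sqrt{2}\,\alpha/\epsilon$, so $\EE[\|\hat{d}^g-d^g\|_2] \leq 2\sqrt{2}\,\alpha/\epsilon$. Combining with the previous step delivers the stated inequality.

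The main obstacle is the second step: rigorously converting the fidelity tolerance $\eta$ on the \emph{value} of $\cP^g$ into a norm bound on the \emph{dual optimizer} $\hat{\bm{y}}^{g^*}$, since LP duals are only guaranteed to be locally constant up to the nearest basis change and may jump across facets. This likely requires either a non-degeneracy assumption on $\cP^g$ or a direct structural argument tying the $\eta$-slack on the value function to the magnitude of the dual perturbation; without such an ingredient, one would have to fall back on the weaker two-term bound $\|\bm{y}^{g^*}\|_2\,\EE\|\hat{d}^g-d^g\|_2 + \eta$. A secondary subtlety is that $\bar{x}^e$ only approximates $\bm{x}^{e^*}$, but the paper's standing assumption that the historical prediction is accurate sidesteps this cleanly.
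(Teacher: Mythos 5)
Your overall route is the same as the paper's: interpret the dual prices $\bm{y}^{g}$ as the sensitivity of the GM objective to the demand, multiply by the expected demand perturbation from Theorem~\ref{thm:error} (the $2\sqrt{2}\alpha/\epsilon$ standard-deviation bound, which you re-derive via Jensen exactly as the paper uses it), and absorb the fidelity tolerance into the price factor to obtain $|\bm{y}^{g^*}+\eta|$. Your weak-duality sandwich is in fact a cleaner, global version of the paper's informal identity $(\cO^{g*}-\hat{\cO}^{g^*})/(d^{g}-\hat{d}^{g})=\hat{\bm{y}}^{g^*}$, which is only a first-order/marginal statement.

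The one place you diverge is the step you yourself flag as the main obstacle, and your instinct there is correct: the primal fidelity constraint \eqref{eq:pp2_1} bounds the \emph{value} of $\cP^g$, and an $\eta$-slack on an LP value function does not give Lipschitz control of the dual optimizer, so $\|\hat{\bm{y}}^{g^*}\|\leq\|\bm{y}^{g^*}\|+\eta$ does not follow from \eqref{eq:pp2_1} without a nondegeneracy or basis-stability assumption. The paper does not close this gap either: although the theorem is stated for $PPSM_p$, its proof obtains the $+\eta$ term by invoking the \emph{dual} fidelity constraint \eqref{eq:pp3_1b}, i.e., $|\hat{\by}^{g}-\bar{y}^{g^*}|\leq\eta$, which is a constraint of $PPSM_d$, not of $PPSM_p$. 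Under that constraint your second step is immediate and the stated bound follows; under $PPSM_p$ alone, your proposed fallback of the two-term bound $\|\bm{y}^{g^*}\|\,\EE\|\hat{d}^g-d^g\|+\eta$ (using \eqref{eq:pp2_1} directly on the objective gap) is the honest conclusion. So your proposal is faithful to the paper's argument, and the difficulty you isolate is a genuine one in the original proof rather than a defect of your attempt. A minor caveat shared with the paper: the bound $\EE\|\xi\|_2\leq\sqrt{2}\alpha/\epsilon$ is per-component, so for an $N$-dimensional demand vector a factor $\sqrt{N}$ should appear unless the inequality is read coordinatewise.
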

While the $PPSM_p$ explicitly bounds the cost of privacy in \eqref{eq:pp2_1}, Theorem \ref{th2} shows that the $PPSM_p$ also indirectly provides bounds on the cost of privacy, which are proportional to the sensitivity of the objective value with respect to the demand.

\begin{proof}
The sensitivity of the objective function of the follower with 
respect to a small perturbation on the demand is given by the 
dual variable $\bm{y}^g$ associated with constraint \eqref{bilevel_heat_elec5}. As a result, for a small difference between the original and privacy-preserving gas demand $| d^{g} - \hat{d}^{g}|$, the 
following can be derived:
\begin{equation*}
  \frac{ \cO^{g*} - \hat{\cO}^{g^*} }{ d^{g} - \hat{d}^{g} }= \bm{\hat{y}}^{g^*}.
\end{equation*}
Additionally, from \Cref{thm:error} the variation on demand profile $| d^{g} - \hat{d}^{g} |$ in the PPSM is bounded in expectation by $\dfrac{2\sqrt{2}\alpha}{\epsilon}$. Therefore, the difference in optimal objective costs is bounded by
\begin{equation}
\mathbb{E} \left[| \cO^{g*} - \hat{\cO}^{g^*} |\right]
			    \leq 
			    \dfrac{2\sqrt{2}\alpha}{\epsilon}
			    |\bm{y}^{g^*} + \eta|.
\end{equation}
The right-hand side of the expression above follows from the fidelity constraint \eqref{eq:pp3_1b}.
\end{proof}

\section{Experimental Evaluation}
\label{sec:experiments}

\def\Mp{\textsl{PPSM}_{primal}}
\def\Md{\textsl{PPSM}_{dual}}

\subsection{Case study setup}

The PPSM is evaluated on a test system which is representative of the joint natural gas and electricity systems in the Northeastern US \cite{bent2018joint,byeon2019unit}. 
The system is composed of the IEEE 36-bus NPCC electric power
system \cite{allen2008combined} and a gas transmission network
covering the Pennsylvania-To-Northeast New England area.
The unit commitment data for the $91$ power plants is derived from the RTO unit commitment test system \cite{krall2012rto} based on their technology.

This case study analyzes the performance of the PPSM under various operating conditions of the gas and electricity systems. The electricity demand profile is uniformly increased by $30\%$ and $60\%$, and the gas demand profile is uniformly increased by $10\%$ up to
$130\%$, producing increasingly stressed and difficult operating conditions. 
Both the primal fidelity and dual fidelity approaches are compared to the standard Laplace mechanism (which was shown to be optimal for identity queries \cite{koufogiannis:15}, like those used in this work), for varying values of the indistinguishability parameter $\alpha \in \{0.1, 1, 10\}\times 10^2$ MWh, and the fidelity parameter $\eta \in \{0.01, 0.1, 10.0\}\%$ of the original objective value $\cO^{g^*}$. Notice that the maximal original gas demand vector $d^g$ in the highest stress factor adopted, has minimum, median, and maximum values: $0$, $3.38 \times 10^2$, $98.31 \times 10^2$, respectively. Therefore, the indistinguishability parameters adopted ensure a very low privacy risk. 

We generate $30$ repetitions for each test case and report average results in all experiments. The privacy-preserving mechanisms (PPSM phases 1 and 2) are implemented in Python 3.0 with Gurobi 8.1. The GAUC problem (phase 3) uses the c++ implementation of \cite{byeon2019unit}. A wall-clock limit of 1 hour is given to generate and solve each of the instances. The resolution of the privacy-preserving demand profiles (phases 1 and 2 of PPSM) takes less than 30s for any of the instances. 


\begin{table}[!t]
\centering
\resizebox{\linewidth}{!}
{
\begin{tabular}{llrr @{\hspace{15pt}}|@{\hspace{15pt}} rrr}
\toprule
$\cM$ & $\alpha$ & sat.(\%) & $\Delta_d$ (L1) & $\Delta_{\cO^{uc}} (\%)$ &   $\Delta_{\cO^e} (\%)$&   $\Delta_{\cO^g} (\%)$ \\
\midrule
\multirow{3}{*}{Laplace}
       & 0.1   &  72.61 &   5.96  & 0.1564 &   0.3666 &   0.3668 \\
       & 1.0   &  19.76 &  59.62    &  1.6065 &   3.4445 &   3.4446 \\
       & 10.0  &   4.28 & 596.28    & 21.0501 &  43.1685 &  43.1685 \\
\cline{2-3}
\cline{4-7}
\multirow{3}{*}{$PPSM_p$}
       & 0.1   & 99.29 &  3.89 &    0.0139 &   0.0564 &   0.0565 \\
       & 1.0   & 89.90 & 13.26 &    0.0570 &   0.1067 &   0.1068 \\
       & 10.0  & 82.38 & 14.32 &    0.1235 &   0.2308 &   0.2308 \\
\cline{2-3}
\cline{4-7}
\multirow{3}{*}{$PPSM_d$}
       & 0.1   & 99.28 &  3.89 &   0.0130 &   0.0705 &   0.0706 \\
       & 1.0   & 96.38 & 13.26 &   0.0536 &   0.1064 &   0.1065 \\
       & 10.0  & 93.95 & 14.32 &   0.1193 &   0.2256 &   0.2256 \\
\bottomrule
\end{tabular}
}
\caption{Left: Satisfactory instances (\%) and L1 errors (MWh) on the gas demands for varying indistinguishability parameters $\alpha$, and $\eta = 0.1 \%$ of the GM cost. 
Right: Errors (\%) on the leader objective ($\cO^{uc}$) and followers' objective ($\cO^e$ and $\cO^g$).
\label{tab:1}}
\end{table}

\subsection{Limits of The Laplace Mechanisms}
\label{sec:laplace_limits}

This section studies the applicability to our context of interest of the Laplace mechanism. 
Table \ref{tab:1} (left) reports the percentage of the feasible solutions (over $1170$ instances) across different values of the indistinguishability parameter $\alpha$. It compares the Laplace mechanism with $PPSM_p$ and $PPSM_d$ that use, respectively, the primal and the dual fidelity phases defined in \Cref{sec:p_fidelity_phase} and \Cref{sec:d_fidelity_phase}. When the indistinguishability parameter exceeds $0.1$ the Laplace-obfuscated gas demands rarely produce a feasible solution to the GAUC problem. 
\emph{These results justify the need of studying more advanced privacy-preserving mechanisms for Stackelberg game problems, and hence the proposed PPSM}. 
In contrast, the PPSMs result in a much higher number of feasible solutions. Indeed, all ``unsolved'' PPSM cases are due to timeout and not as a direct result of infeasibility.  
Additionally, we verified that the two PPSMs are always able to find a feasible solution to the GM problem $\cP^g(\bar{x}^e, \hat{\bd}^g)$. 

Table \ref{tab:1} (left) also reports the L1 distance $\Delta_d$ between the original gas demand $d^g$ and the privacy-preserving versions obtained with each of the mechanisms analyzed. Unsurprisingly, the L1 errors increase with the increasing of the indistinguishability parameter $\alpha$, as larger indistinguishability induce more noise.
However, the L1 errors introduced by the PPSM are much more contained than the Laplace ones, producing more than a order of magnitude more accurate results for the larger indistinguishability parameters.
\emph{These results indicate that the highly-perturbed demand profiles induced by the Laplace mechanism lead to infeasibility in the GAUC and GM problems, whereas the PPSMs manage to restore \textit{consistency} of the post-processed demand profiles.}

\subsection{Leader and Follower Objectives}

The next results evaluate the ability of PPSM to preserve the optimal objective values of the leader and the follower agents problems.  

Table \ref{tab:1} (right) tabulates the errors, in percentage, on the objective costs of the GAUC problem (leader), the EM problem and the GM problem (followers) at varying indistinguishability parameters $\alpha$, and for a fixed fidelity parameter $\eta = 0.1\%$.
The errors $\Delta_{\cO}$ are defined as $\frac{|\cO^* - \tilde{\cO^*}|}{\cO} \%$, where $\cO \in \{\cO^{uc}, \cO^{e}, \cO^{g}\}$ and are computed in expectation over the feasible instances only. 
Parameter $\alpha$ controls the amount of noise being added to the gas demand profiles, therefore, the objective costs are closer to their original values when $\alpha$ is small. 
\emph{Observe that the PPSMs induce objective costs differences that are from one to two orders of magnitude more accurate that those induced by the Laplace mechanism, and that are at most $1\%$ of the original objective costs.}
Additionally, $PPSM_d$ is consistently more accurate than $PPSM_p$. The reason is that, by enforcing fidelity of the coordination variables $\bm{y}^g$, the $PPSM_d$ better limits the impact of the noisy data on the leader objective (GAUC), which in turn results in more faithful solutions for the followers problems.

\begin{table}
\centering
\begin{tabular}{llrrr}
\toprule
     $\cM$ & $\eta$ & $\Delta_{\cO^{uc}} (\%)$ &   $\Delta_{\cO^e} (\%)$&   $\Delta_{\cO^g} (\%)$ \\
\midrule
\multirow{1}{*}{Laplace}
     & NA &  21.0501 &  43.1685 &  43.1685 \\
\cline{2-5}
\multirow{3}{*}{$\textsl{PPSM}_{p}$}
     & 0.1\% &   0.1152 &   0.2136 &   0.2136 \\
     & 1.0\% &   0.1173 &   0.2223 &   0.2222 \\
     & 10.0\%&   0.1380 &   0.2567 &   0.2567 \\
\cline{2-5}
\multirow{3}{*}{$\textsl{PPSM}_{d}$}
     & 0.1\% &   0.1109 &   0.2223 &   0.2223 \\
     & 1.0\% &   0.1142 &   0.2080 &   0.2079 \\
     & 10.0\% &  0.1331 &   0.2469 &   0.2469 \\
     \bottomrule
\end{tabular}
\caption{Cost objective differences (\%) at varying fidelity parameters $\eta$ \%, and indistinguishability parameter $\alpha=10$.
\label{tab:2}}
\end{table}
These results are further illustrated in Table \ref{tab:2}, that analyzes the difference in objective costs at varying fidelity parameters $\eta$, for a fixed indistinguishability parameter $\alpha = 10$ (i.e., the largest privacy level attainable in our experimental setting). Once again, the results of the PPSM mechanisms are at least two orders of magnitude more precise than those obtained by the Laplace mechanism. Additionally, notice that the fidelity parameter $\eta$ impacts the accuracy of the privacy-preserving objective costs. Indeed, the fidelity parameter indirectly controls the deviation of the privacy-preserving GAUC and GM objectives with respect to the original ones. While the results differences are small, in percentage, their impact on the objective functions (which are in the order of $10^6$) is non-negligible.

\begin{figure*}
\includegraphics[width=0.33\linewidth]{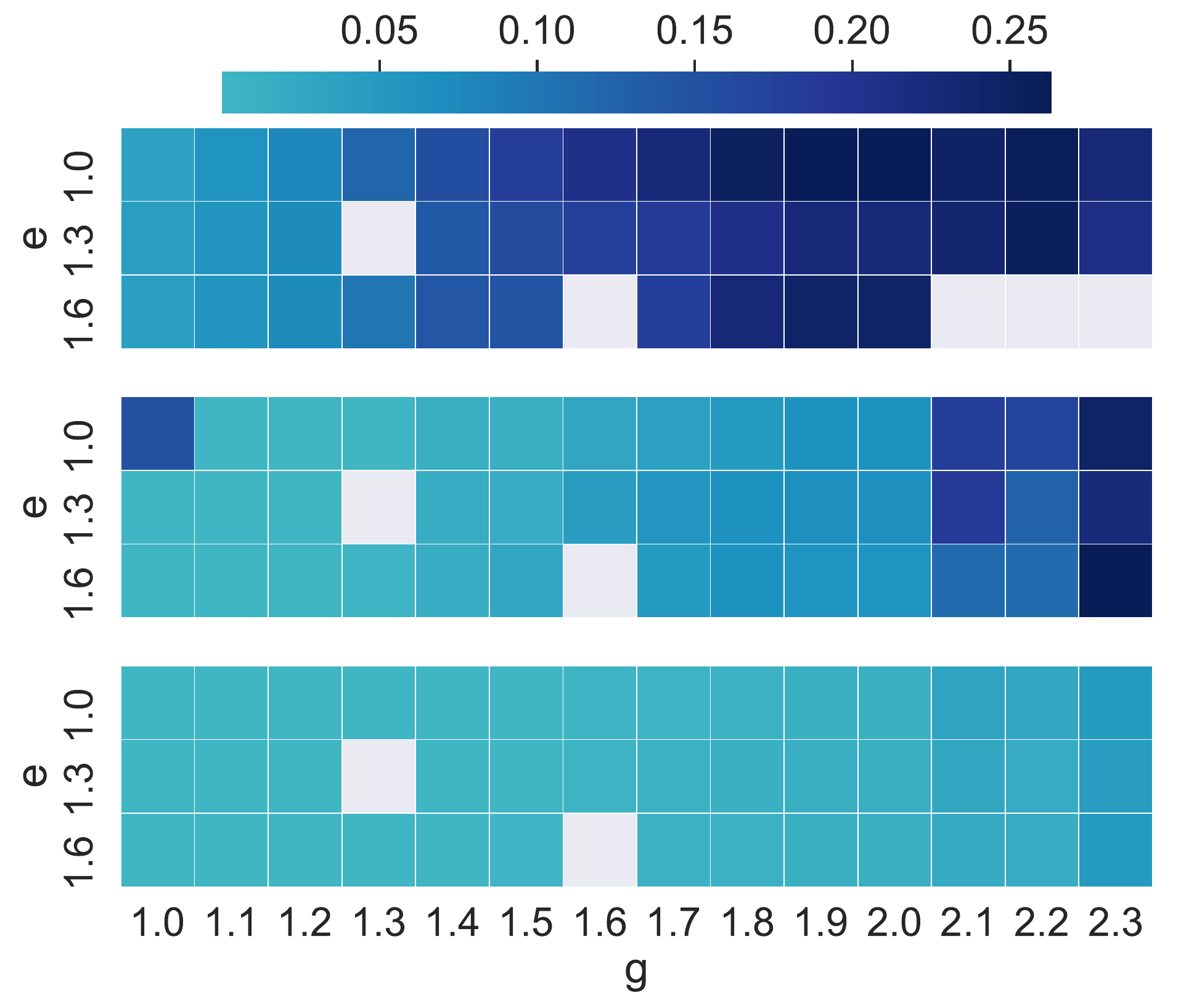}
\includegraphics[width=0.33\linewidth]{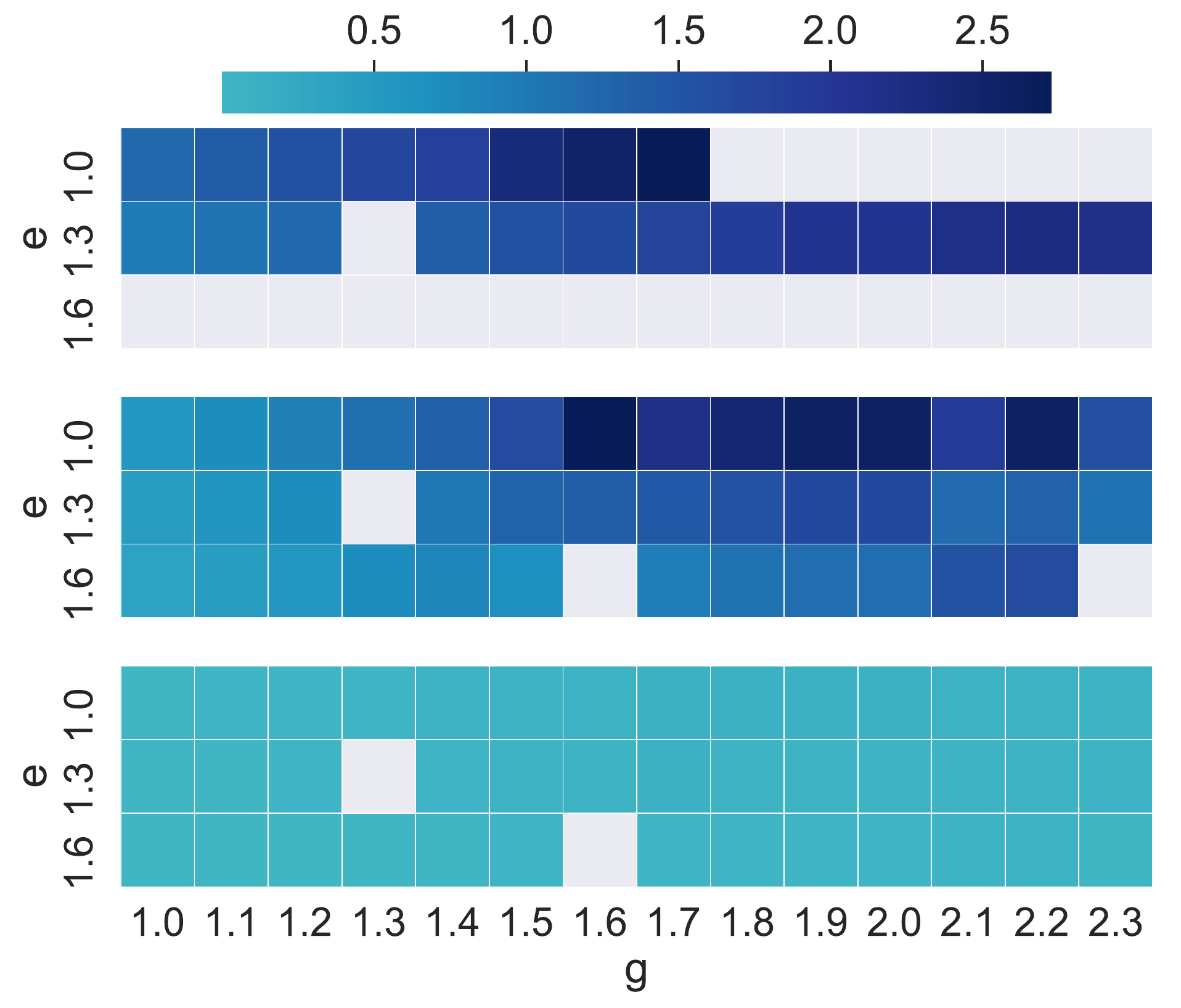}
\includegraphics[width=0.33\linewidth]{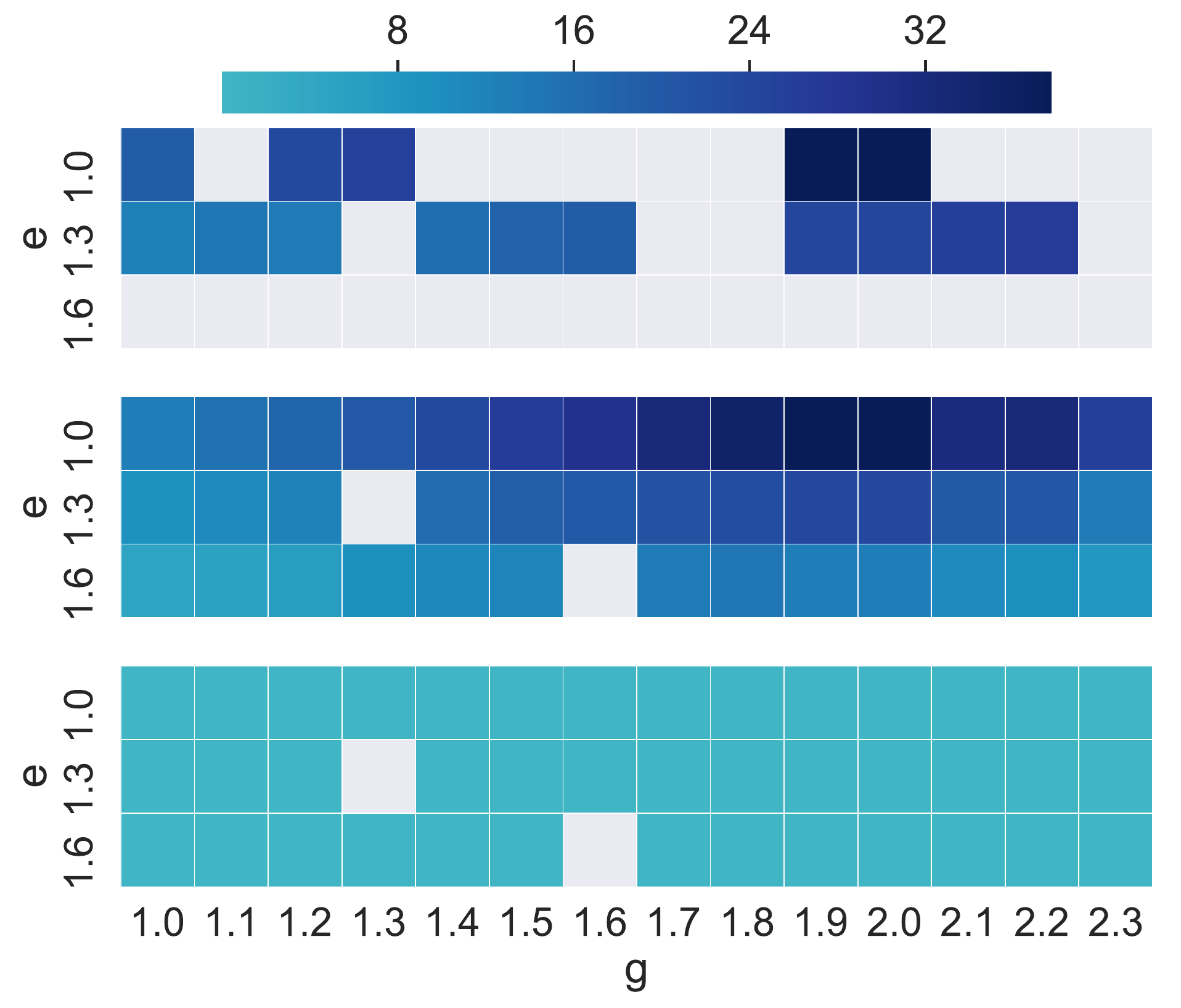}

\caption{Total (GAUC and GM) objective cost difference (\%) at varying gas (g) and electricity (e) stress levels for privacy-preserving data obtained via \emph{Laplace Mechanism} (top), \emph{$PPSM_p$} (middle), \emph{$PPSM_d$} (bottom). Indistinguishably parameters: Left: $\alpha=0.1$, Center: $\alpha=1.0$, Right: $\alpha=10.0$. Fidelity value $\eta=0.001 \%$.
\label{fig:3}
}
\end{figure*}

\subsection{Stress Levels Analysis}
Finally, this subsection details the effect of the privacy preserving mechanisms on the combined GAUC problem for all the electricity and gas stress levels adopted. 

Figure \ref{fig:3} reports heatmaps of the total (GAUC and GM) objective cost difference, in percentage, at varying electricity (e) and gas (g) stress levels for the privacy preserving data obtained via the Laplace mechanism (top), $PPSM_p$ (middle), and $PPSM_d$ (bottom). Each square represents the objective cost difference averaged over $30$ instances for a particular electricity and gas stress level. The darker the color, the more pronounced are the errors committed by the mechanisms, as reported in the legends on top of each subfigure. Gray squares represent the set of instances for which no feasible solution of the GAUC problem was found or when a timeout is reached. 
The illustration reports the cost differences for indistinguishability parameters $\alpha=0.1$ (left subfigure) $\alpha=1.0$ (middle subfigure), and $\alpha=10.0$ (right subfigure).

These results illustrate three trends: 
Firstly, they show that, for every mechanism, the objective differences becomes more pronounced as the electricity and gas stress levels increase. This can be explained by the increased impact of the Laplace perturbations on higher values of gas demand profiles.
Secondly, they remark that the PPSMs produce privacy-preserving Stackelberg problems which are consistently more faithful to the original problems with respect to those produced by the Laplace mechanism.
Finally, they show that $PPSM_d$ is consistently more accurate than $PPSM_p$ over all stress levels. \emph{These results are significant, as they show the robustness of the proposed PPSMs over different electricity and natural gas demand profiles. They indicate that these PPSMs can provide a realistic and efficient solution for the coordination of electricity and natural gas markets.}





\section{Related work}
\label{sec:related_work}

The obfuscation of data values under the lens of differential privacy is a challenging task that has been studied from several angles. Often, the released data is generated from a data synopsis in the form of a noisy histogram \cite{li2010optimizing,hay2016principled,qardaji:14,kasiviswanathan2010price,mohammed:11,xiao2010differentially}. These methods are typically adopted in the context of statistical queries. 

The design of markets for private data has also received considerable attention. For example, Ghosh and Roth 
design a truthful mechanism that allows a data analyst to buy information from a population from which they can estimate some statistic \cite{ghosh2010selling}. Niu et al.~study mechanisms to trade noisy aggregate statistics from the perspective of a data broker in data markets \cite{Niu:2018}. 

However, all the proposals above, do not involve data used as input to a complex optimization problem, as in the case of this work. 
The closest work related to the proposal in this paper can be considered~\cite{Fioretto:18b}, that, in the context energy networks, propose a privacy-preserving mechanism for releasing datasets that can be used as input to an \emph{optimal power flow} problem. A similar line of work uses hierarchical (bilevel) optimization for obfuscating the energy network parameters or locations while ensuring high utility for the problem of interest \cite{Fioretto:ijcai-19,Fioretto:TSG19}. 

In contrast to the aforementioned studies, the proposed PPSM focuses on solving Stackelberg games in which the followers parameters are sensitive. The PPSM also enforces two notions of fidelity of the privacy-preserving information to the leader and/or follower objectives. Finally, to the best of our knowledge, this is the first differentially-private mechanism that is applied to the coordination of sequential electricity and natural gas markets.  

\section{Conclusions}
\label{sec:conclusions}

This paper introduced a differentially private mechanism to protect the \emph{sensitive information} exchanged during the coordination of the sequential electricity and natural gas market clearings. The \emph{proposed Privacy-Preserving Stackelberg Mechanism (PPSM)} obfuscates the gas demand profile exchanged by the gas market, while also ensuring that the resulting problem preserves the fundamental properties of the original Stackelberg game. Specifically,
the PPSM was shown to enjoy strong properties: It complies with the notion of differential privacy and ensures that the outcomes of the privacy-preserving Stackelberg mechanism are close-to-optimality for each agent.
Experimental results on several gas and electricity market benchmarks based on a real case study demonstrated the effectiveness of the approach: \emph{The PPSM was shown to obtain up to two orders of magnitude improvement on the cost of the agents when compared to a traditional Laplace mechanism}.

While the proposed methods were detailed in the context of a sequential and independent natural gas and electricity market, they open up various domains of application. For example, the market-based coordination 
approach for electricity and district heating systems proposed in 
\cite{mitridati2016optimal,mitridati2019bid}, the game-theoretical coordination of transmission and distribution systems developed in \cite{yuan2017hierarchical,le2019game,hassan2018energy}, or the real-time pricing scheme for coordinating the demand response from aggregators and consumers \cite{zhang2016real,momber2015retail}.

Future work will focus on several avenues, including extended theoretical bounds on the cost of privacy, studying the game-theoretic properties of this privacy-preserving Stackelberg game, accounting for uncertainty on the public data, and, studying the applicability of the PPSM to other domains.


\bibliographystyle{ACM-Reference-Format}  
\bibliography{bibliography,aamas}  


\begin{thebibliography}{37}


\ifx \showCODEN    \undefined \def \showCODEN     #1{\unskip}     \fi
\ifx \showDOI      \undefined \def \showDOI       #1{#1}\fi
\ifx \showISBNx    \undefined \def \showISBNx     #1{\unskip}     \fi
\ifx \showISBNxiii \undefined \def \showISBNxiii  #1{\unskip}     \fi
\ifx \showISSN     \undefined \def \showISSN      #1{\unskip}     \fi
\ifx \showLCCN     \undefined \def \showLCCN      #1{\unskip}     \fi
\ifx \shownote     \undefined \def \shownote      #1{#1}          \fi
\ifx \showarticletitle \undefined \def \showarticletitle #1{#1}   \fi
\ifx \showURL      \undefined \def \showURL       {\relax}        \fi
\providecommand\bibfield[2]{#2}
\providecommand\bibinfo[2]{#2}
\providecommand\natexlab[1]{#1}
\providecommand\showeprint[2][]{arXiv:#2}

\bibitem[\protect\citeauthoryear{Alabdulwahab, Abusorrah, Zhang, and
  Shahidehpour}{Alabdulwahab et~al\mbox{.}}{2015}]%
        {alabdulwahab2015coordination}
\bibfield{author}{\bibinfo{person}{Ahmed Alabdulwahab},
  \bibinfo{person}{Abdullah Abusorrah}, \bibinfo{person}{Xiaping Zhang}, {and}
  \bibinfo{person}{Mohammad Shahidehpour}.} \bibinfo{year}{2015}\natexlab{}.
\newblock \showarticletitle{Coordination of interdependent natural gas and
  electricity infrastructures for firming the variability of wind energy in
  stochastic day-ahead scheduling}.
\newblock \bibinfo{journal}{\emph{IEEE Transactions on Sustainable Energy}}
  \bibinfo{volume}{6}, \bibinfo{number}{2} (\bibinfo{year}{2015}),
  \bibinfo{pages}{606--615}.
\newblock


\bibitem[\protect\citeauthoryear{Allen, Lang, and Ilic}{Allen
  et~al\mbox{.}}{2008}]%
        {allen2008combined}
\bibfield{author}{\bibinfo{person}{Eric~H Allen}, \bibinfo{person}{Jeffrey~H
  Lang}, {and} \bibinfo{person}{Marija~D Ilic}.}
  \bibinfo{year}{2008}\natexlab{}.
\newblock \showarticletitle{A combined equivalenced-electric, economic, and
  market representation of the northeastern power coordinating council us
  electric power system}.
\newblock \bibinfo{journal}{\emph{IEEE Transactions on Power Systems}}
  \bibinfo{volume}{23}, \bibinfo{number}{3} (\bibinfo{year}{2008}),
  \bibinfo{pages}{896--907}.
\newblock


\bibitem[\protect\citeauthoryear{Andr{\'e}s, Bordenabe, Chatzikokolakis, and
  Palamidessi}{Andr{\'e}s et~al\mbox{.}}{2013}]%
        {andres2013geo}
\bibfield{author}{\bibinfo{person}{Miguel~E Andr{\'e}s},
  \bibinfo{person}{Nicol{\'a}s~E Bordenabe}, \bibinfo{person}{Konstantinos
  Chatzikokolakis}, {and} \bibinfo{person}{Catuscia Palamidessi}.}
  \bibinfo{year}{2013}\natexlab{}.
\newblock \showarticletitle{Geo-indistinguishability: Differential privacy for
  location-based systems}. In \bibinfo{booktitle}{\emph{Proceedings of the 2013
  ACM SIGSAC conference on Computer \& communications security}}. ACM,
  \bibinfo{pages}{901--914}.
\newblock


\bibitem[\protect\citeauthoryear{Anonymous}{Anonymous}{2019}]%
        {onlineappendix}
\bibfield{author}{\bibinfo{person}{Anonymous}.}
  \bibinfo{year}{2019}\natexlab{}.
\newblock \bibinfo{title}{PPSM: A Privacy-Preserving Stackelberg Mechanism
  (Online Appendix)}.
\newblock \bibinfo{howpublished}{Online}.
\newblock
\urldef\tempurl%
\url{{https://tinyurl.com/yekqpzgu}}
\showURL{%
Retrieved Nov 15, 2019 from \tempurl}


\bibitem[\protect\citeauthoryear{Apple}{Apple}{2017}]%
        {apple}
\bibfield{author}{\bibinfo{person}{Apple}.} \bibinfo{year}{2017}\natexlab{}.
\newblock \bibinfo{title}{Differential Privacy Overview}.
\newblock
  \bibinfo{howpublished}{\url{https://images.apple.com/privacy/docs/Differential_Privacy_Overview.pdf}}.
\newblock


\bibitem[\protect\citeauthoryear{Bent, Blumsack, Van~Hentenryck,
  Borraz-S{\'a}nchez, and Shahriari}{Bent et~al\mbox{.}}{2018}]%
        {bent2018joint}
\bibfield{author}{\bibinfo{person}{Russell Bent}, \bibinfo{person}{Seth
  Blumsack}, \bibinfo{person}{Pascal Van~Hentenryck}, \bibinfo{person}{Conrado
  Borraz-S{\'a}nchez}, {and} \bibinfo{person}{Mehdi Shahriari}.}
  \bibinfo{year}{2018}\natexlab{}.
\newblock \showarticletitle{Joint electricity and natural gas transmission
  planning with endogenous market feedbacks}.
\newblock \bibinfo{journal}{\emph{IEEE Transactions on Power Systems}}
  \bibinfo{volume}{33}, \bibinfo{number}{6} (\bibinfo{year}{2018}),
  \bibinfo{pages}{6397--6409}.
\newblock


\bibitem[\protect\citeauthoryear{Byeon and Van~Hentenryck}{Byeon and
  Van~Hentenryck}{2019}]%
        {byeon2019unit}
\bibfield{author}{\bibinfo{person}{Geunyeong Byeon} {and}
  \bibinfo{person}{Pascal Van~Hentenryck}.} \bibinfo{year}{2019}\natexlab{}.
\newblock \showarticletitle{Unit Commitment With Gas Network Awareness}.
\newblock \bibinfo{journal}{\emph{arXiv preprint arXiv:1902.03236}}
  (\bibinfo{year}{2019}).
\newblock


\bibitem[\protect\citeauthoryear{Chatzikokolakis, Andr{\'e}s, Bordenabe, and
  Palamidessi}{Chatzikokolakis et~al\mbox{.}}{2013}]%
        {chatzikokolakis2013broadening}
\bibfield{author}{\bibinfo{person}{Konstantinos Chatzikokolakis},
  \bibinfo{person}{Miguel~E Andr{\'e}s}, \bibinfo{person}{Nicol{\'a}s~Emilio
  Bordenabe}, {and} \bibinfo{person}{Catuscia Palamidessi}.}
  \bibinfo{year}{2013}\natexlab{}.
\newblock \showarticletitle{Broadening the scope of differential privacy using
  metrics}. In \bibinfo{booktitle}{\emph{International Symposium on Privacy
  Enhancing Technologies Symposium}}. Springer, \bibinfo{pages}{82--102}.
\newblock


\bibitem[\protect\citeauthoryear{Dwork, McSherry, Nissim, and Smith}{Dwork
  et~al\mbox{.}}{2006}]%
        {dwork:06}
\bibfield{author}{\bibinfo{person}{Cynthia Dwork}, \bibinfo{person}{Frank
  McSherry}, \bibinfo{person}{Kobbi Nissim}, {and} \bibinfo{person}{Adam
  Smith}.} \bibinfo{year}{2006}\natexlab{}.
\newblock \showarticletitle{Calibrating noise to sensitivity in private data
  analysis}. In \bibinfo{booktitle}{\emph{TCC}}, Vol.~\bibinfo{volume}{3876}.
  Springer, \bibinfo{pages}{265--284}.
\newblock


\bibitem[\protect\citeauthoryear{Dwork and Roth}{Dwork and Roth}{2013}]%
        {dwork:13}
\bibfield{author}{\bibinfo{person}{Cynthia Dwork} {and} \bibinfo{person}{Aaron
  Roth}.} \bibinfo{year}{2013}\natexlab{}.
\newblock \showarticletitle{The algorithmic foundations of differential
  privacy}.
\newblock \bibinfo{journal}{\emph{Theoretical Computer Science}}
  \bibinfo{volume}{9}, \bibinfo{number}{3-4} (\bibinfo{year}{2013}),
  \bibinfo{pages}{211--407}.
\newblock


\bibitem[\protect\citeauthoryear{Fanti, Pihur, and Erlingsson}{Fanti
  et~al\mbox{.}}{2016}]%
        {fanti:16}
\bibfield{author}{\bibinfo{person}{Giulia Fanti}, \bibinfo{person}{Vasyl
  Pihur}, {and} \bibinfo{person}{{\'U}lfar Erlingsson}.}
  \bibinfo{year}{2016}\natexlab{}.
\newblock \showarticletitle{Building a RAPPOR with the unknown:
  Privacy-preserving learning of associations and data dictionaries}.
\newblock \bibinfo{journal}{\emph{Proceedings on Privacy Enhancing
  Technologies}} \bibinfo{volume}{2016}, \bibinfo{number}{3}
  (\bibinfo{year}{2016}), \bibinfo{pages}{41--61}.
\newblock


\bibitem[\protect\citeauthoryear{Fioretto and Hentenryck}{Fioretto and
  Hentenryck}{2018}]%
        {fioretto:CPAIOR-18}
\bibfield{author}{\bibinfo{person}{Ferdinando Fioretto} {and}
  \bibinfo{person}{Pascal~Van Hentenryck}.} \bibinfo{year}{2018}\natexlab{}.
\newblock \showarticletitle{Constrained-based Differential Privacy: Releasing
  Optimal Power Flow Benchmarks Privately}. In
  \bibinfo{booktitle}{\emph{Proceedings of the International Conference on the
  Integration of Constraint Programming, Artificial Intelligence, and
  Operations Research {(CPAIOR)}}}. \bibinfo{pages}{215--231}.
\newblock


\bibitem[\protect\citeauthoryear{Fioretto, Mak, and {Van Hentenryck}}{Fioretto
  et~al\mbox{.}}{2019a}]%
        {Fioretto:TSG19}
\bibfield{author}{\bibinfo{person}{Ferdinando Fioretto},
  \bibinfo{person}{{Terrence W.K.} Mak}, {and} \bibinfo{person}{Pascal {Van
  Hentenryck}}.} \bibinfo{year}{2019}\natexlab{a}.
\newblock \showarticletitle{Differential Privacy for Power Grid Obfuscation}.
\newblock \bibinfo{journal}{\emph{IEEE Transactions on Smart Grid}}
  (\bibinfo{year}{2019}), \bibinfo{pages}{1--1}.
\newblock
\urldef\tempurl%
\url{https://doi.org/10.1109/TSG.2019.2936712}
\showDOI{\tempurl}


\bibitem[\protect\citeauthoryear{Fioretto, Mak, and {Van Hentenryck}}{Fioretto
  et~al\mbox{.}}{2019b}]%
        {Fioretto:ijcai-19}
\bibfield{author}{\bibinfo{person}{Ferdinando Fioretto},
  \bibinfo{person}{Terrence W.~K. Mak}, {and} \bibinfo{person}{Pascal {Van
  Hentenryck}}.} \bibinfo{year}{2019}\natexlab{b}.
\newblock \showarticletitle{Privacy-Preserving Obfuscation of Critical
  Infrastructure Networks}.
\newblock  (\bibinfo{year}{2019}), \bibinfo{pages}{1086--1092}.
\newblock


\bibitem[\protect\citeauthoryear{Fioretto and Van~Hentenryck}{Fioretto and
  Van~Hentenryck}{2018}]%
        {Fioretto:18b}
\bibfield{author}{\bibinfo{person}{Ferdinando Fioretto} {and}
  \bibinfo{person}{Pascal Van~Hentenryck}.} \bibinfo{year}{2018}\natexlab{}.
\newblock \showarticletitle{Constrained-Based Differential Privacy: Releasing
  Optimal Power Flow Benchmarks Privately}. In
  \bibinfo{booktitle}{\emph{Proceedings of Integration of Constraint
  Programming, Artificial Intelligence, and Operations Research (CPAIOR)}}.
  \bibinfo{pages}{215--231}.
\newblock


\bibitem[\protect\citeauthoryear{Gabriel, Conejo, Fuller, Hobbs, and
  Ruiz}{Gabriel et~al\mbox{.}}{2012}]%
        {gabriel2012complementarity}
\bibfield{author}{\bibinfo{person}{Steven~A Gabriel},
  \bibinfo{person}{Antonio~J Conejo}, \bibinfo{person}{J~David Fuller},
  \bibinfo{person}{Benjamin~F Hobbs}, {and} \bibinfo{person}{Carlos Ruiz}.}
  \bibinfo{year}{2012}\natexlab{}.
\newblock \bibinfo{booktitle}{\emph{Complementarity modeling in energy
  markets}}. Vol.~\bibinfo{volume}{180}.
\newblock \bibinfo{publisher}{Springer Science \& Business Media}.
\newblock


\bibitem[\protect\citeauthoryear{Ghosh and Roth}{Ghosh and Roth}{2010}]%
        {ghosh2010selling}
\bibfield{author}{\bibinfo{person}{Arpita Ghosh} {and} \bibinfo{person}{Aaron
  Roth}.} \bibinfo{year}{2010}\natexlab{}.
\newblock \bibinfo{title}{Selling Privacy at Auction}.
\newblock
\newblock
\showeprint[arxiv]{cs.GT/1011.1375}


\bibitem[\protect\citeauthoryear{Hassan and Dvorkin}{Hassan and
  Dvorkin}{2018}]%
        {hassan2018energy}
\bibfield{author}{\bibinfo{person}{Ali Hassan} {and} \bibinfo{person}{Yury
  Dvorkin}.} \bibinfo{year}{2018}\natexlab{}.
\newblock \showarticletitle{Energy storage siting and sizing in coordinated
  distribution and transmission systems}.
\newblock \bibinfo{journal}{\emph{IEEE Transactions on Sustainable Energy}}
  \bibinfo{volume}{9}, \bibinfo{number}{4} (\bibinfo{year}{2018}),
  \bibinfo{pages}{1692--1701}.
\newblock


\bibitem[\protect\citeauthoryear{Hay, Machanavajjhala, Miklau, Chen, and
  Zhang}{Hay et~al\mbox{.}}{2016}]%
        {hay2016principled}
\bibfield{author}{\bibinfo{person}{Michael Hay}, \bibinfo{person}{Ashwin
  Machanavajjhala}, \bibinfo{person}{Gerome Miklau}, \bibinfo{person}{Yan
  Chen}, {and} \bibinfo{person}{Dan Zhang}.} \bibinfo{year}{2016}\natexlab{}.
\newblock \showarticletitle{Principled evaluation of differentially private
  algorithms using dpbench}. In \bibinfo{booktitle}{\emph{Proceedings of the
  2016 International Conference on Management of Data}}. ACM,
  \bibinfo{pages}{139--154}.
\newblock


\bibitem[\protect\citeauthoryear{Kasiviswanathan, Rudelson, Smith, and
  Ullman}{Kasiviswanathan et~al\mbox{.}}{2010}]%
        {kasiviswanathan2010price}
\bibfield{author}{\bibinfo{person}{Shiva~Prasad Kasiviswanathan},
  \bibinfo{person}{Mark Rudelson}, \bibinfo{person}{Adam Smith}, {and}
  \bibinfo{person}{Jonathan Ullman}.} \bibinfo{year}{2010}\natexlab{}.
\newblock \showarticletitle{The price of privately releasing contingency tables
  and the spectra of random matrices with correlated rows}. In
  \bibinfo{booktitle}{\emph{Proceedings of the forty-second ACM symposium on
  Theory of computing}}. ACM, \bibinfo{pages}{775--784}.
\newblock


\bibitem[\protect\citeauthoryear{Koufogiannis, Han, and Pappas}{Koufogiannis
  et~al\mbox{.}}{2015}]%
        {koufogiannis:15}
\bibfield{author}{\bibinfo{person}{Fragkiskos Koufogiannis},
  \bibinfo{person}{Shuo Han}, {and} \bibinfo{person}{George~J Pappas}.}
  \bibinfo{year}{2015}\natexlab{}.
\newblock \showarticletitle{Optimality of the laplace mechanism in differential
  privacy}.
\newblock \bibinfo{journal}{\emph{arXiv preprint arXiv:1504.00065}}
  (\bibinfo{year}{2015}).
\newblock


\bibitem[\protect\citeauthoryear{Krall, Higgins, and O’Neill}{Krall
  et~al\mbox{.}}{2012}]%
        {krall2012rto}
\bibfield{author}{\bibinfo{person}{Eric Krall}, \bibinfo{person}{Michael
  Higgins}, {and} \bibinfo{person}{Richard~P O’Neill}.}
  \bibinfo{year}{2012}\natexlab{}.
\newblock \showarticletitle{RTO unit commitment test system}.
\newblock \bibinfo{journal}{\emph{Federal Energy Regulatory Commission}}
  (\bibinfo{year}{2012}).
\newblock


\bibitem[\protect\citeauthoryear{Le~Cadre, Mezghani, and Papavasiliou}{Le~Cadre
  et~al\mbox{.}}{2019}]%
        {le2019game}
\bibfield{author}{\bibinfo{person}{H{\'e}l{\`e}ne Le~Cadre},
  \bibinfo{person}{Ily{\`e}s Mezghani}, {and} \bibinfo{person}{Anthony
  Papavasiliou}.} \bibinfo{year}{2019}\natexlab{}.
\newblock \showarticletitle{A game-theoretic analysis of
  transmission-distribution system operator coordination}.
\newblock \bibinfo{journal}{\emph{European Journal of Operational Research}}
  \bibinfo{volume}{274}, \bibinfo{number}{1} (\bibinfo{year}{2019}),
  \bibinfo{pages}{317--339}.
\newblock


\bibitem[\protect\citeauthoryear{Li, Hay, Rastogi, Miklau, and McGregor}{Li
  et~al\mbox{.}}{2010}]%
        {li2010optimizing}
\bibfield{author}{\bibinfo{person}{Chao Li}, \bibinfo{person}{Michael Hay},
  \bibinfo{person}{Vibhor Rastogi}, \bibinfo{person}{Gerome Miklau}, {and}
  \bibinfo{person}{Andrew McGregor}.} \bibinfo{year}{2010}\natexlab{}.
\newblock \showarticletitle{Optimizing linear counting queries under
  differential privacy}. In \bibinfo{booktitle}{\emph{Proceedings of the
  twenty-ninth ACM SIGMOD-SIGACT-SIGART symposium on Principles of database
  systems}}. ACM, \bibinfo{pages}{123--134}.
\newblock


\bibitem[\protect\citeauthoryear{Maharjan, Zhu, Zhang, Gjessing, and
  Basar}{Maharjan et~al\mbox{.}}{2013}]%
        {maharjan2013dependable}
\bibfield{author}{\bibinfo{person}{Sabita Maharjan}, \bibinfo{person}{Quanyan
  Zhu}, \bibinfo{person}{Yan Zhang}, \bibinfo{person}{Stein Gjessing}, {and}
  \bibinfo{person}{Tamer Basar}.} \bibinfo{year}{2013}\natexlab{}.
\newblock \showarticletitle{Dependable demand response management in the smart
  grid: A Stackelberg game approach}.
\newblock \bibinfo{journal}{\emph{IEEE Transactions on Smart Grid}}
  \bibinfo{volume}{4}, \bibinfo{number}{1} (\bibinfo{year}{2013}),
  \bibinfo{pages}{120--132}.
\newblock


\bibitem[\protect\citeauthoryear{Mitridati and Pinson}{Mitridati and
  Pinson}{2016}]%
        {mitridati2016optimal}
\bibfield{author}{\bibinfo{person}{Lesia Mitridati} {and}
  \bibinfo{person}{Pierre Pinson}.} \bibinfo{year}{2016}\natexlab{}.
\newblock \showarticletitle{Optimal coupling of heat and electricity systems: A
  stochastic hierarchical approach}. In \bibinfo{booktitle}{\emph{2016
  International Conference on Probabilistic Methods Applied to Power Systems
  (PMAPS)}}. IEEE, \bibinfo{pages}{1--6}.
\newblock


\bibitem[\protect\citeauthoryear{Mitridati, Van~Hentenryck,
  et~al\mbox{.}}{Mitridati et~al\mbox{.}}{2019}]%
        {mitridati2019bid}
\bibfield{author}{\bibinfo{person}{Lesia Mitridati}, \bibinfo{person}{Pascal
  Van~Hentenryck}, {et~al\mbox{.}}} \bibinfo{year}{2019}\natexlab{}.
\newblock \showarticletitle{A Bid-Validity Mechanism for Sequential Heat and
  Electricity Market Clearing}.
\newblock \bibinfo{journal}{\emph{arXiv preprint arXiv:1910.08617}}
  (\bibinfo{year}{2019}).
\newblock


\bibitem[\protect\citeauthoryear{Mohammed, Chen, Fung, and Yu}{Mohammed
  et~al\mbox{.}}{2011}]%
        {mohammed:11}
\bibfield{author}{\bibinfo{person}{Noman Mohammed}, \bibinfo{person}{Rui Chen},
  \bibinfo{person}{Benjamin Fung}, {and} \bibinfo{person}{Philip~S Yu}.}
  \bibinfo{year}{2011}\natexlab{}.
\newblock \showarticletitle{Differentially private data release for data
  mining}. In \bibinfo{booktitle}{\emph{Proceedings of the 17th ACM SIGKDD
  international conference on Knowledge discovery and data mining}}. ACM,
  \bibinfo{pages}{493--501}.
\newblock


\bibitem[\protect\citeauthoryear{Momber, Wogrin, and San~Rom{\'a}n}{Momber
  et~al\mbox{.}}{2015}]%
        {momber2015retail}
\bibfield{author}{\bibinfo{person}{Ilan Momber}, \bibinfo{person}{Sonja
  Wogrin}, {and} \bibinfo{person}{Tom{\'a}s~G{\'o}mez San~Rom{\'a}n}.}
  \bibinfo{year}{2015}\natexlab{}.
\newblock \showarticletitle{Retail pricing: A bilevel program for PEV
  aggregator decisions using indirect load control}.
\newblock \bibinfo{journal}{\emph{IEEE Transactions on Power Systems}}
  \bibinfo{volume}{31}, \bibinfo{number}{1} (\bibinfo{year}{2015}),
  \bibinfo{pages}{464--473}.
\newblock


\bibitem[\protect\citeauthoryear{Niu, Zheng, Wu, Tang, Gao, and Chen}{Niu
  et~al\mbox{.}}{2018}]%
        {Niu:2018}
\bibfield{author}{\bibinfo{person}{Chaoyue Niu}, \bibinfo{person}{Zhenzhe
  Zheng}, \bibinfo{person}{Fan Wu}, \bibinfo{person}{Shaojie Tang},
  \bibinfo{person}{Xiaofeng Gao}, {and} \bibinfo{person}{Guihai Chen}.}
  \bibinfo{year}{2018}\natexlab{}.
\newblock \showarticletitle{Unlocking the Value of Privacy: Trading Aggregate
  Statistics over Private Correlated Data}. In
  \bibinfo{booktitle}{\emph{Proceedings of the 24th ACM SIGKDD International
  Conference on Knowledge Discovery \&\#38; Data Mining}}
  \emph{(\bibinfo{series}{KDD '18})}. \bibinfo{publisher}{ACM},
  \bibinfo{address}{New York, NY, USA}, \bibinfo{pages}{2031--2040}.
\newblock
\showISBNx{978-1-4503-5552-0}
\urldef\tempurl%
\url{https://doi.org/10.1145/3219819.3220013}
\showDOI{\tempurl}


\bibitem[\protect\citeauthoryear{Ordoudis}{Ordoudis}{2018}]%
        {ordoudis2018market}
\bibfield{author}{\bibinfo{person}{Christos Ordoudis}.}
  \bibinfo{year}{2018}\natexlab{}.
\newblock \showarticletitle{Market-based Approaches for the Coordinated
  Operation of Electricity and Natural Gas Systems}.
\newblock  (\bibinfo{year}{2018}).
\newblock


\bibitem[\protect\citeauthoryear{Pinson, Mitridati, Ordoudis, and
  Ostergaard}{Pinson et~al\mbox{.}}{2017}]%
        {pinson2017towards}
\bibfield{author}{\bibinfo{person}{Pierre Pinson}, \bibinfo{person}{Lesia
  Mitridati}, \bibinfo{person}{Christos Ordoudis}, {and} \bibinfo{person}{Jacob
  Ostergaard}.} \bibinfo{year}{2017}\natexlab{}.
\newblock \showarticletitle{Towards fully renewable energy systems: Experience
  and trends in Denmark}.
\newblock \bibinfo{journal}{\emph{CSEE journal of power and energy systems}}
  \bibinfo{volume}{3}, \bibinfo{number}{1} (\bibinfo{year}{2017}),
  \bibinfo{pages}{26--35}.
\newblock


\bibitem[\protect\citeauthoryear{Qardaji, Yang, and Li}{Qardaji
  et~al\mbox{.}}{2014}]%
        {qardaji:14}
\bibfield{author}{\bibinfo{person}{Wahbeh Qardaji}, \bibinfo{person}{Weining
  Yang}, {and} \bibinfo{person}{Ninghui Li}.} \bibinfo{year}{2014}\natexlab{}.
\newblock \showarticletitle{PriView: practical differentially private release
  of marginal contingency tables}. In \bibinfo{booktitle}{\emph{Proceedings of
  the 2014 ACM SIGMOD international conference on Management of data}}. ACM,
  \bibinfo{pages}{1435--1446}.
\newblock


\bibitem[\protect\citeauthoryear{Simaan and Cruz}{Simaan and Cruz}{1973}]%
        {simaan1973stackelberg}
\bibfield{author}{\bibinfo{person}{Marwaan Simaan} {and}
  \bibinfo{person}{Jose~B Cruz}.} \bibinfo{year}{1973}\natexlab{}.
\newblock \showarticletitle{On the Stackelberg strategy in nonzero-sum games}.
\newblock \bibinfo{journal}{\emph{Journal of Optimization Theory and
  Applications}} \bibinfo{volume}{11}, \bibinfo{number}{5}
  (\bibinfo{year}{1973}), \bibinfo{pages}{533--555}.
\newblock


\bibitem[\protect\citeauthoryear{Xiao, Xiong, and Yuan}{Xiao
  et~al\mbox{.}}{2010}]%
        {xiao2010differentially}
\bibfield{author}{\bibinfo{person}{Yonghui Xiao}, \bibinfo{person}{Li Xiong},
  {and} \bibinfo{person}{Chun Yuan}.} \bibinfo{year}{2010}\natexlab{}.
\newblock \showarticletitle{Differentially private data release through
  multidimensional partitioning}. In \bibinfo{booktitle}{\emph{Workshop on
  Secure Data Management}}. Springer, \bibinfo{pages}{150--168}.
\newblock


\bibitem[\protect\citeauthoryear{Yuan and Hesamzadeh}{Yuan and
  Hesamzadeh}{2017}]%
        {yuan2017hierarchical}
\bibfield{author}{\bibinfo{person}{Zhao Yuan} {and}
  \bibinfo{person}{Mohammad~Reza Hesamzadeh}.} \bibinfo{year}{2017}\natexlab{}.
\newblock \showarticletitle{Hierarchical coordination of TSO-DSO economic
  dispatch considering large-scale integration of distributed energy
  resources}.
\newblock \bibinfo{journal}{\emph{Applied energy}}  \bibinfo{volume}{195}
  (\bibinfo{year}{2017}), \bibinfo{pages}{600--615}.
\newblock


\bibitem[\protect\citeauthoryear{Zhang, Wang, Wang, Pinson, Morales, and
  {\O}stergaard}{Zhang et~al\mbox{.}}{2016}]%
        {zhang2016real}
\bibfield{author}{\bibinfo{person}{Chunyu Zhang}, \bibinfo{person}{Qi Wang},
  \bibinfo{person}{Jianhui Wang}, \bibinfo{person}{Pierre Pinson},
  \bibinfo{person}{Juan~M Morales}, {and} \bibinfo{person}{Jacob
  {\O}stergaard}.} \bibinfo{year}{2016}\natexlab{}.
\newblock \showarticletitle{Real-time procurement strategies of a proactive
  distribution company with aggregator-based demand response}.
\newblock \bibinfo{journal}{\emph{IEEE Transactions on Smart Grid}}
  \bibinfo{volume}{9}, \bibinfo{number}{2} (\bibinfo{year}{2016}),
  \bibinfo{pages}{766--776}.
\newblock


\end{thebibliography}

\newpage
\appendix

\section{Nomenclature}

\begin{table}[h!]
	\centering
		\begin{tabular}{ll}
			\hline
			$\mathcal{T}$ & Time steps in optimization period \\
			$\mathcal{V}$ &  Nodes in gas and electricity networks \\
			$\mathcal{V}_j$ &  Nodes connected to node $j$ \\
			$S_j$ & Bids from gas and electricity suppliers at node $j$ \\
			$\Phi^u_j$ &  Time periods with distinct start-up costs of supplier $j$ \\
			$\Phi^{u,\text{init}}_{j}$ & Initial up- or down-time periods of supplier $j$ \\
			$\mathcal{A}$ & Connections in gas network\\
			$\mathcal{A}_v$ & Control valves in gas network departing from node $j$\\
			$\mathcal{A}_v$ & Compressors in gas network\\
			$\mathcal{A}_p$ & Pipelines in gas network\\
			$a_t$ & Tail of connection $a \in \mathcal{A}$\\
			$a_h$ & Hail of connection $a \in \mathcal{A}$\\
			$\underline{\Phi}^\uparrow_{jt}$ & Minimum up-time of supplier $j$ at time $t$ (h) \\
			$\underline{\Phi}^\downarrow_{jt}$ & Minimum down-time of supplier $j$ at time $t$  (h) \\
			$u_{j0}^{\text{init}}$ & Initial on/off state of supplier $j$ \\
			$\underline{\pi}_{j},\overline{\pi}_{j}$ & Pressure bounds at node $j$ at time $t$ (MWh) \\
			$\underline{\phi}_{a},\overline{\phi}_{a}$ & Flow bounds of connection $a$ at time $t$ (MWh) \\
			$\phi_{a}^k$ & Discretized values of flow in pipeline $a$ (MWh) \\
			$\underline{c}^c_{a},\overline{c}^c_{a}$ & Ratio bounds of compressor $a$ \\			
			$\underline{c}^v_{a},\overline{c}^v_{a}$ & Ratio bounds of valve $a$ \\
			$W_{a},\overline{c}^v_{a}$ & Ratio bounds of pipeline $a$ \\
			$\alpha_j$ & Price parameter of GFPP at node $j$ \\
			$c^{uc}_{jb}$ & No load cost of supplier at node $j$ (Wh) \\
			$c^e_{jb}$ & Price of electricity bid $b$ at node $j$ (Wh) \\
			$c^g_{jb}$ & Price of electricity bid $b$ at node $j$ (Wh) \\
			$d^e_{jt}$ &  Electricity demand at node $j$ and time $t$ (Wh) \\
			$d^g_{jt}$ &  Gas demand at node $j$ and time $t$ (Wh) \\
			$\overline{s}^e_{jb}$ & Quantity bound of electricity bid $b$ at node $j$  \\
			$\overline{s}^g_{jb}$ & Quantity bound of gas bid $b$ at node $j$  \\
			$\underline{s}^g_{j},\overline{s}^g_{j}$ & Gas production bounds at node $j$  \\
			$B_{jk}$ &  Susceptance of line connecting nodes $j$ and $k$ (S) \\
			$\overline{f}_{jk}$ & Maximum flow in line connecting nodes $j$ and $k$ (Wh) \\
			\hline
	\end{tabular}
	\caption{Sets of indexes and parameters in electricity and gas systems}
	\label{table1}
\end{table}

\begin{table}[h!]
	\centering
	\begin{tabular}{ll}
		\hline
		$u_{jbt}$ & Commitment variable of bid $b$ of unit $j$ at time $t$ \\
		$v^\uparrow_{jt}$ & Start-up variable of unit $j$ at time $t$ \\
		$v^\downarrow_{jt}$ & Shut-down variable of unit $j$ at time $t$ \\
		$r_{jt}$ & Start-up cost of unit $j$ at time $t$ \\
		$s^e_{jbt}$ & Dispatch of electricity bid $b$ of supplier $j$ at time $t$ (Wh) \\
		$s^g_{jbt}$ & Dispatch of gas bid $b$ of supplier $j$ at time $t$ (Wh) \\
		$\gamma^g_{jt}$ & Gas consumption of GFPP $j$ at time $t$ (Wh) \\
		$q_{jt}$ & Gas load shedding at node $j$ at time $t$ (Wh) \\
		$y_{jt}^g$ & Gas market-clearing price at node $j$ at time $t$ (\$/Wh) \\
		\hline
	\end{tabular}
	\caption{Primal decision variables of the UC, EM, and GM clearing problems}
	\label{table3}
\end{table}

\section{Mathematical Formulations}
\label{app_math}

\subsection*{\textbf{Gas-aware electricity UC problem}}

The gas-aware electricity UC problem described in Section \ref{section:4} can be formulated as follows:
As explained in the companion paper, the upper-level problem represents the heat UC problem, which can be formulated as follows:
\begin{subequations} \label{ul}
	\begin{alignat}{2}
	& \min \ && \sum_{t \in \mathcal{T}} \sum_{j \in \mathcal{V}} \left( c_{jt}^{uc} u_{j0t}    +  r_{jt} + \sum_{b \in S_{j}} c_{jbt}^{e} s^e_{jbt}  \right) \label{ul1} \\  
	& \text{s.t.} \ &&	r_{jt} \geq  c_{jh}^\uparrow \left(   u_{jt}^0 - \sum_{k = t - h}^{t} u_{jk}^0 \right) , \forall j \in \mathcal{V} , t \in \mathcal{T} , h \in \Phi^u_{jt} \label{ul2}  \\		
	& \quad && r_{jt} \geq 0  , \forall j \in \mathcal{V} , t \in \mathcal{T} \label{ul3} \\
	& \quad && u_{j0t} = u_{j}^{\text{init}} , \forall j \in \mathcal{V} , t \in \Phi^{u,\text{init}}_{j} \label{ul4} \\
	& \quad && \sum_{k=t-\underline{\Phi}^\uparrow_{jt} + 1}^{t} v^\uparrow_{jk} \leq u_{j0t} , \forall j \in \mathcal{V} , t \in \mathcal{T} \setminus \Phi^{u,\text{init}}_{j} \label{ul5} \\
	& \quad && \sum_{k=t-\underline{\Phi}^\downarrow_{jt} + 1}^{t} v^\uparrow_{jk} \leq 1 - u_{j0(t-\underline{\Phi}^\downarrow_{jt})}    , \forall j \in \mathcal{V} ,  t \in \mathcal{T} \setminus \Phi^{u,\text{init}}_{j} \label{ul6} \\
	& \quad && v^\uparrow_{jt}  - v^\downarrow_{jt}  = u_{j0t}  - u_{j0(t-1)} , \forall j \in \mathcal{V}, t \in \mathcal{T} \label{ul7} \\
	& \quad && u_{jbt} \leq u_{j(b-1)t}  , \forall j \in \mathcal{V} , t \in \mathcal{T}, b \in S_{j} \label{ul8} \\
	& \quad && v^\uparrow_{jt}  , v^\downarrow_{jt}  , u_{j0t} \in \{0,1\} , \forall  j \in \mathcal{V} , t \in \mathcal{T} \label{ul8.2} \\ 
	& \quad && u_{jbt} \in \{0,1\} , \forall j \in \mathcal{V} , t \in \mathcal{T}, b \in S_{j}  \label{ul8.3} \\
	& \quad && \left( c^\text{e}_{j(b-1)} - M_j \right) \left( u_{j(b-1)t} - u_{jbt} \right) \geq 2\alpha_jy^g_{jt} - M_j  \nonumber \\
	& \quad &&    ,  \forall j \in \mathcal{V}^{GFPP}, t \in \mathcal{T}, b \in S_{j} \label{ul10} \\ 
	& \quad && \{s^e_{jbt},y^g_{jt}\} \in \text{ primal and dual sol. of }\eqref{ll}  \label{follower} 	
	\end{alignat}
\end{subequations}
Equations \eqref{ul2} and \eqref{ul3} model the start-up cost depending on the time the units have been offline. Indeed, the expression $\left( u_{jt}^0 - \sum_{k = t - h}^{t} u_{jk}^0 \right)$ is one when unit $j$ becomes online after it has been turned off for $h$ time periods. Equation \eqref{ul4} fixes the initial minimum up- and down-time of the units. Equations \eqref{ul5} and \eqref{ul6} enforce the minimum up- and down-time, respectively. Equations \eqref{ul7} and \eqref{ul8} state the relationship between the binary variables for the on-off, start-up, and shut-down statuses of each unit. Equation \eqref{ul8} ensures that a bid is selected only if the previous bid has been selected. Furthermore, Equation \eqref{ul10} represents the bid-validity constraints, which ensure that the price of the last selected bid of each GFPP is greater than their marginal electricity production cost. Finally, Equation \eqref{follower} represents the feedback from the primal and dual solutions of the middle- and lower-level problems.

\subsection*{\textbf{EM clearing problem}}

\begin{subequations} \label{ll}
	\begin{alignat}{2}
	& \min \ && \sum_{t \in \mathcal{T}} \sum_{j \in \mathcal{V}} \sum_{b \in S_j} c_{jb}^e s^e_{jbt} \label{ll1} \\
	& \text{s.t.} \ &&  \sum_{b \in S_j} s^e_{jbt} \nonumber \\
	& \quad &&   = d^e_{jt} + \sum_{k \in \mathcal{V}_j} B_{kj} (\theta_{mt} - \theta_{nt} ) , \forall j \in \mathcal{IV},t \in \mathcal{T}  \label{ll2} \\
	& \quad && B_{kj}\left(\theta_{kt} - \theta_{jt}\right) \leq \overline{f}_{kj} , \forall j \in \mathcal{V}, k \in \mathcal{V}_j, t \in \mathcal{T}  \label{ll3} \\
	& \quad && \overline{s}_{jb}^e u_{j(b+1)t} \leq  s_{jbt}^e \leq  \overline{s}^e_{j(b-1)} u_{jbt}, \forall j \in \mathcal{IV}, b \in S_j, t \in \mathcal{T}, \label{ll4} 
	\end{alignat}
\end{subequations}
Equation \eqref{ll2} represents the power balance equation at each node of the power system. Equation \eqref{ll3} bounds the power flow between two nodes. The power transmission network is modeled using the standard linearized DC power flow. Finally, Equation \eqref{ll4} represents the upper and lower bound of each selected bid.

\subsection*{\textbf{GM clearing problem}}

The GM clearing problem can be formulated as the following optimization problem:
\begin{subequations} \label{eq:gas}
\begin{align}
\mathcal{O}^*\left( d_{jt}^g \right) = 
\min ~~& \sum_{t \in \mathcal{T}} \sum_{j \in \mathcal{V}} 
\left(c_{jb}^g \, \bm{s_{jbt}^g} \ + \ 
\kappa_j \, \bm{q_{jt}} \right) \\
{s.t.~~} & 
\sum_{b \in S_j} \bm{s_{jbt}^g} -  d_{jt}^g + \bm{q_{jt}}  - \gamma^g_{jt} \nonumber \\
& = \sum_{a \in \mathcal{A}:j=a_h } \bm{\phi_{at}} - \sum_{a \in \mathcal{A}:j=a_t } \bm{\phi_{at}} 
\:\: \forall j \in \mathcal{V}, t \in \mathcal{T} \label{eq:gas_1} \\
& 0 \leq \bm{q_{jt}} \leq d_{jt}^g
\:\: \forall j \in \mathcal{V}, t \in \mathcal{T} \label{eq:gas_2} \\
& \underline{s}_j^g \leq  \sum_{b \in S_j} \bm{s_{jbt}^g}  \leq \overline{s}_j^g
\:\: \forall j \in \mathcal{V}, t \in \mathcal{T} \label{eq:gas_3}\\
& 0 \leq \bm{s_{jbt}^g} \leq \overline{s}_{jb}^g
\:\: \forall j \in \mathcal{V}, b \in S_j , t \in \mathcal{T} \label{eq:gas_4}\\ 
& \underline{c}_a^c \bm{\pi_{a_ht}} \leq \bm{\pi_{a_tt}} \leq \overline{c}_a^c \bm{ \pi_{a_ht} }
\:\: \forall a \in \mathcal{A}_c, t \in \mathcal{T} \label{eq:gas_5}\\
& \underline{c}_a^v \bm{\pi_{a_ht}} \leq \bm{\pi_{a_tt}} \leq \overline{c}_a^v \bm{\pi_{a_ht} }
\:\: \forall a \in \mathcal{A}_v, t \in \mathcal{T} \label{eq:gas_6}\\
& \bm{\pi_{a_ht}} - \bm{\pi_{a_tt}} \geq 2 W_a \phi^k_{a} \bm{\phi_{at}} - \left( \phi^k_{a} \right)^2 \nonumber \\ 
& \:\: \forall a \in \mathcal{A}_p, k \in \{1,...,K\} t \in \mathcal{T} \label{eq:gas_7} \\
& \underline{\pi}_j \leq \bm{\pi_{jt}} \leq \overline{\pi}_j
\:\: \forall j \in \mathcal{V}, t \in \mathcal{T} \label{eq:gas_8}\\
& \underline{\phi}_a \leq \bm{\phi_{at}} \leq \overline{\phi}_a
\:\: \forall a \in \mathcal{A}_p, t \in \mathcal{T} \label{eq:gas_9}
\end{align}
\end{subequations}

The post-processing steps can be formulated as follows:
%
%

\subsection*{\textbf{dual GM clearing problem}}

For a given value of the gas demands $\bm{\hat{d}_{jt}^g}$, using the equivalent KKT conditions of Problem \eqref{eq:gas}, the lower-level problems in the post-processing algorithms can be replaced by the following set of equations:
\begin{subequations} \label{eq:gas_KKT}
	\begin{align}
	& 
	\sum_{b \in S_j} \bm{s_{jbt}^g} -  \bm{\hat{d}_{jt}^g} + \bm{q_{jt}}  - \gamma_{jt} \nonumber \\ 
& = \sum_{a \in \mathcal{H}^p_j } \bm{\phi_{at}} - \sum_{a \in \mathcal{T}^p_j} \bm{\phi_{at}} \:\: \forall j \in \mathcal{V}, t \in \mathcal{T} \label{eq:gas_KKT_1} \\
	& 0 \leq \underline{\gamma}^q_{jt} \perp \left( - \bm{q_{jt}} \right) \leq 0 \:\: \forall j \in \mathcal{V}, t \in \mathcal{T}  \label{eq:gas_KKT_2}  \\
	& 0 \leq \overline{\gamma}^q_{jt} \perp \left( \bm{q_{jt}} - \bm{\hat{d}_{jt}^g} \right) \leq 0 \:\: \forall j \in \mathcal{V}, t \in \mathcal{T}  \label{eq:gas_KKT_3} \\
	& 0 \leq \underline{\gamma}^s_{jt} \perp \left( \underline{s}_j^g -  \sum_{b \in S_j} \bm{s_{jbt}^g} \right)  \leq 0 \:\: \forall j \in \mathcal{V}, t \in \mathcal{T} \label{eq:gas_KKT_4} \\
	& 0 \leq \overline{\gamma}^s_{jt} \perp \left( \sum_{b \in S_j} \bm{s_{jbt}^g} - \overline{s}_j^g \right)  \leq 0   \:\: \forall j \in \mathcal{V}, t \in \mathcal{T} \label{eq:gas_KKT_5} \\
	& 0 \leq \underline{\gamma}^s_{jbt} \perp \left( - \bm{s_{jbt}^g} \right) \leq 0
	\:\: \forall j \in \mathcal{V}, b \in S_j , t \in \mathcal{T} \label{eq:gas_KKT_6}  \\ 
	& 0 \leq \overline{\gamma}^s_{jbt} \perp \left( \bm{s_{jbt}^g} - \overline{s}_s^g \right) \leq 0   \:\: \forall j \in \mathcal{V}, b \in S_j , t \in \mathcal{T} \label{eq:gas_KKT_7} \\ 
	& 0 \leq \underline{\mu}^\pi_{at}  \perp \left( \underline{c}_a^c \bm{\pi_{a_ht}} - \bm{\pi_{a_tt}} \right) \leq 0	\:\: \forall a \in \mathcal{A}_c, t \in \mathcal{T} \label{eq:gas_KKT_8} \\
	& 0 \leq \overline{\mu}^\pi_{at}  \perp \left( \bm{\pi_{a_tt}} - \overline{c}_a^c \bm{\pi_{a_ht}}  \right) \leq 0	\:\: \forall a \in \mathcal{A}_c, t \in \mathcal{T} \label{eq:gas_KKT_9} \\
	& 0 \leq \underline{\mu}^\pi_{at}  \perp \left( \underline{c}_a^v \bm{\pi_{a_ht}} - \bm{\pi_{a_tt}} \right) \leq 0	\:\: \forall a \in \mathcal{A}_v, t \in \mathcal{T} \label{eq:gas_KKT_10} \\
	& 0 \leq \overline{\mu}^\pi_{at}  \perp \left( \bm{\pi_{a_tt}} - \overline{c}_a^v \bm{\pi_{a_ht}}  \right) \leq 0	\:\: \forall a \in \mathcal{A}_v, t \in \mathcal{T} \label{eq:gas_KKT_11} \\
	&  0 \leq \underline{\mu}^\phi_{a,t,k} \perp \left( 2 W_a \phi^k_{a} \bm{\phi_{at}} - \left( \phi^k_{a} \right)^2 - \bm{\pi_{a_ht}} + \bm{\pi_{a_tt}} \right) \leq 0 \nonumber \\ 
    & \:\: \forall a \in \mathcal{A}_p, k \in \{1,...,K\} t \in \mathcal{T}  \label{eq:gas_KKT_12} \\
	& 0 \leq \underline{\gamma}^\pi_{jt} \perp \left( \underline{\pi}_j - \bm{\pi_{jt}} \right) \leq 0	\:\: \forall j \in \mathcal{V}, t \in \mathcal{T} \label{eq:gas_KKT_13} \\
	& 0 \leq \overline{\gamma}^\pi_{jt} \perp \left( \bm{\pi_{jt}} - \overline{\pi}_j  \right) \leq 0  \:\: \forall j \in \mathcal{V}, t \in \mathcal{T} \label{eq:gas_KKT_14} \\
	& 0 \leq \underline{\gamma}^\phi_{at} \perp \left( \underline{\phi}_a - \bm{\phi_{at}} \right) \leq 0	\:\: \forall a \in \mathcal{A}_p, t \in \mathcal{T} \label{eq:gas_KKT_15} \\
	& 0 \leq \overline{\gamma}^\phi_{at} \perp \left( \bm{\phi_{at}} - \overline{\phi}_a  \right) \leq  0 \:\: \forall a \in \mathcal{A}_p, t \in \mathcal{T} \label{eq:gas_KKT_16} \\
	& c_{jb} - y^g_{jt} - \underline{\gamma}^s_{jt} + \overline{\gamma}^s_{jt} - \underline{\gamma}^s_{jbt} + \overline{\gamma}^s_{jbt} = 0  \:\: \forall j \in \mathcal{V}, b \in S_j , t \in \mathcal{T}  \label{eq:gas_KKT_17} \\
	& \kappa_j - y_{jt}^g - \underline{\gamma}^q_{jt} + \overline{\gamma}^q_{jt} = 0  \:\: \forall j \in \mathcal{V}, t \in \mathcal{T} \label{eq:gas_KKT_18} \\
	& - \underline{\gamma}_{jt}^\pi + \overline{\gamma}^\pi_{jt} + \sum_{a \in \mathcal{T}^c_j } \left( - \underline{\mu}_{at}^\pi + \overline{\mu}_{at}^\pi \right) +  \sum_{a \in \mathcal{H}^c_j } \left( \underline{c}^c_a \underline{\mu}_{at}^\pi  - \overline{c}^c_a \overline{\mu}_{at}^\pi \right) \nonumber \\ 
& + \sum_{a \in \mathcal{T}^v_j } \left( - \underline{\mu}_{at}^\pi + \overline{\mu}_{at}^\pi \right)  +   \sum_{a \in \mathcal{H}^v_j } \left( \underline{c}^v_a \underline{\mu}_{at}^\pi  - \overline{c}^v_a \overline{\mu}_{at}^\pi \right)	\nonumber \\
	& + \sum_{k=1}^K \left( \sum_{ a \in \mathcal{T}^p_j } \underline{\mu}_{a,t,k}^\phi -  \sum_{a \in \mathcal{H}^p_j } \underline{\mu}_{a,t,k}^\phi  \right) = 0  \:\: \forall j \in \mathcal{V},  t \in \mathcal{T} \label{eq:gas_KKT_19} \\
	& y^g_{a_ht} - y^g_{a_tt} + \sum_{k=1}^{K} 2 W_a \phi_a^k \underline{\mu}_{a,t,k} - \underline{\gamma}^\phi_{at} +  \overline{\gamma}^\phi_{at} = 0 \nonumber \\
	& \:\: \forall a \in \mathcal{A}_p,  t \in \mathcal{T} \label{eq:gas_KKT_20}
	\end{align}
\end{subequations}
Equations \eqref{eq:gas_KKT_1}-\eqref{eq:gas_KKT_16} represent in a compact form the primal and dual constraints, and the complementarity conditions of Problem \eqref{eq:gas}. Equations \eqref{eq:gas_KKT_17}-\eqref{eq:gas_KKT_20} represent the stationarity conditions of Problem \eqref{eq:gas}. Furthermore, the bilinear complementarity conditions in \eqref{eq:gas_KKT_2}-\eqref{eq:gas_KKT_16} can be linearized, by using disjunctive constraints. Therefore the post-processing problems can be recast as mixed integer second order cone programs (MISOCP).

\end{document}